\newcommand{\polar}{^\circ}
\newcommand{\gauge}{\gamma}
\newcommand{\As}{_{\scriptscriptstyle\Ascr}}
\newcommand{\Aso}{_{\scriptscriptstyle\Ascr_1}}
\newcommand{\Ast}{_{\scriptscriptstyle\Ascr_2}}
\newcommand{\Asi}{_{\scriptscriptstyle\Ascr_i}}
\newcommand{\Asum}{_{\scriptscriptstyle\Ascr_1 + \scriptscriptstyle\Ascr_2}}
\newcommand{\Ass}{_{\scriptscriptstyle\Ascr_s}}
\newcommand{\na}{^{\natural}}
\newcommand{\nao}{_1^{\natural}}
\newcommand{\nat}{_2^{\natural}}
\newcommand{\nai}{_i^{\natural}}
\newcommand{\nag}{_{\scriptscriptstyle S}^{\natural}}
\DeclareMathOperator{\SO}{SO}
\DeclareMathOperator{\mE}{\mathop{\mathbb{E}}}
\DeclareMathOperator{\mP}{\mathop{\mathbb{P}}}
\newcommand{\maxconv}{\mathop{\diamond}}
\newcommand{\titlelink}[1]{\texorpdfstring{#1}{}}
\newcommand{\xagg}{x\subs^\natural}
\newcommand{\irange}[2]{#1\mathord{:}#2}
\newcommand{\change}[1]{#1}
\newcommand{\cmark}{\ding{51}}%
\newcommand{\xmark}{\ding{55}}%
\newtheorem{assumption}{Assumption}
\newtheorem{proposition}{Proposition}
\newtheorem{definition}{Definition}
\newtheorem{corollary}{Corollary}
\newtheorem{theorem}{Theorem}
\newtheorem{lemma}{Lemma}
\crefname{assumption}{Assumption}{Assumptions}
\crefname{subsection}{section}{sections}
\crefname{paragraph}{Paragraph}{Paragraphs}
\setlist{nolistsep}
\newlist{propenum}{enumerate}{1} % also creates a counter called 'propenumi'
\setlist[propenum]{label=\alph*{\rm)}, ref=\theproposition(\alph*)}
\newlist{corenum}{enumerate}{1} % also creates a counter called 'propenumi'
\setlist[corenum]{label=\alph*{\rm)}, ref=\thecorollary(\alph*)}
\newlist{thmenum}{enumerate}{1} % also creates a counter called 'propenumi'
\setlist[thmenum]{label=\alph*{\rm)}, ref=\thetheorem(\alph*)}
\newlist{lemmaenum}{enumerate}{1} % also creates a counter called 'propenumi'
\setlist[lemmaenum]{label=\alph*{\rm)}, ref=\thelemma(\alph*)}
\begin{document}
%
% paper title
% Titles are generally capitalized except for words such as a, an, and, as,
% at, but, by, for, in, nor, of, on, or, the, to and up, which are usually
% not capitalized unless they are the first or last word of the title.
% Linebreaks \\ can be used within to get better formatting as desired.
% Do not put math or special symbols in the title.
\title{Polar Deconvolution of Mixed Signals}
%
%
% author names and IEEE memberships
% note positions of commas and nonbreaking spaces ( ~ ) LaTeX will not break
% a structure at a ~ so this keeps an author's name from being broken across
% two lines.
% use \thanks{} to gain access to the first footnote area
% a separate \thanks must be used for each paragraph as LaTeX2e's \thanks
% was not built to handle multiple paragraphs
%

\author{Zhenan Fan\thanks{Department of Computer Science, University of British Columbia, Vancouver, BC, Canada}\footnotemark[1]
, Halyun Jeong\thanks{Department of Mathematics, University of California, Los Angeles, California, United States}\footnotemark[2]
, Babhru Joshi\thanks{Department of Mathematics, University of British Columbia, Vancouver, BC, Canada}\footnotemark[3]
, Michael P. Friedlander\footnotemark[1, 3]}% <-this % stops a space
\maketitle

% As a general rule, do not put math, special symbols or citations
% in the abstract or keywords.
\begin{abstract}
The signal demixing problem seeks to separate a superposition of multiple signals into its constituent components. This paper studies a two-stage approach that first decompresses and subsequently deconvolves the noisy and undersampled observations of the superposition using two convex programs. Probabilistic error bounds are given on the accuracy with which this process approximates the individual signals. The theory of polar convolution of convex sets and gauge functions plays a central role in the analysis and solution process.  If the measurements are random and the noise is bounded, this approach stably recovers low-complexity and mutually incoherent signals, with high probability and with near optimal sample complexity. We develop an efficient algorithm, based on level-set and conditional-gradient methods, that solves the convex optimization problems with sublinear iteration complexity and linear space requirements. Numerical experiments on both real and synthetic data confirm the theory and the efficiency of the approach.
\end{abstract}

% Note that keywords are not normally used for peerreview papers.
\begin{IEEEkeywords}
signal demixing, polar convolution, atomic sparsity, convex optimization 
\end{IEEEkeywords}

% For peer review papers, you can put extra information on the cover
% page as needed:
% \ifCLASSOPTIONpeerreview
% \begin{center} \bfseries EDICS Category: 3-BBND \end{center}
% \fi
%
% For peerreview papers, this IEEEtran command inserts a page break and
% creates the second title. It will be ignored for other modes.
\IEEEpeerreviewmaketitle

%%%%%%%%%%%%%%%%%%%%%%%%% SECTION 1 %%%%%%%%%%%%%%%%%%%%%%%%%%%%%%%%%%%
\section{Introduction}\label{sec:1}

The signal demixing problem seeks to separate a superposition of signals into its constituent components. In the measurement model we consider, a set of signals $\{x\nai\}_{i=1}^k$ in $\Re^n$ are observed through noisy
measurements $b\in\Re^m$, with $m\le n$, of the form
\begin{equation}\label{eq:main-problem}
  b = M\xagg + \eta \text{with} \xagg\coloneqq \sum\limits_{i = 1}^k x\nai.
\end{equation}
The known linear operator $M:\Re^n \rightarrow \Re^m$ models the acquisition process of the superposition vector $\xagg$. The vector $\eta\in \Re^m$ represents noise uncorrelated with the data. This measurement model and its variations are useful for a range of data-science applications, including mixture models~\cite{araki2009blind,quiros2012dependent}, blind deconvolution~\cite{ahmed2013blind}, blind source separation~\cite{chan2008convex}, and morphological component analysis~\cite{bobin2007morphological}.

A central concern of the demixing problem \eqref{eq:main-problem} is to delineate efficient procedures and accompanying conditions that make it possible to recover the constituent signals to within a prescribed accuracy---using the fewest number of measurements $m$. The recovery of these constituent signals cannot be accomplished without additional information, such as the latent structure in each signal $x\nai$. We build on the general atomic-sparsity framework formalized by Chandrasekaran et al.~\cite{chandrasekaran2012convex}, and assume that each signal $x\nai$ is itself well represented as a superposition of a few atomic signals from a collection  $\Ascr_i\subset\Re^n$. In other words, the vectors $\{x\nai\}_{i=1}^k$ are paired with atomic sets $\{\Ascr_i\}_{i=1}^k$ that allow for \change{nonnegative decompositions} of the form
\begin{equation} \label{eq:decomposition}
  x\nai = \sum_{a\in \Ascr_i}c_a a, \quad c_a\geq 0,\quad \forall a \in \Ascr_i,
\end{equation}
where most of the coefficients $c_a$ are zero. This model of atomic sparsity includes a range of important notions of sparsity, such as sparse vectors, which are sparse in the set of canonical vectors, and low-rank matrices, which are sparse in the set of rank-1 matrices with unit spectral norm. Other important generalizations include higher-order tensor decompositions, useful in computer vision~\cite{vasilescu2002multilinear} and handwritten digit classification~\cite{savas2007digit}, and polynomial atomic decomposition~\cite{carando2008atomic}.

\change{The nonnegative constraints on the coefficients $c_a$ in~\eqref{eq:decomposition} are key for the validity of this decomposition, particularly for the case where the atomic sets $\Ascr_i$ are not \emph{centrosymmetric}---i.e., when $a\in\Ascr_i$ does not imply that ${-}{a}\in\Ascr_i$. For example, suppose that the ground-truth signal $x\nai$ is known to be low rank and positive definite. In that case, we would choose the corresponding atomic set
\[
  \Ascr_i=\set{uu^T \mid \|u\|=1},
\]
which is the set of symmetric unit-norm rank-1 matrices. The the nonnegativity constraint in~\eqref{eq:decomposition} then ensures that all nontrivial combinations of the atoms in $\Ascr_i$ produce positive definite matrices. The nonnegative constraint on the coefficients $c_a$ can be dropped only in the special case where the atomic set $\Ascr_i$ is centrosymmetric, but this is not an assumption that we make.%
}

A common approach to recover an atomic signal is to use the gauge function 
\begin{equation*}
  \gauge\As(x) \coloneqq \inf_{c_a}\left\{\sum_{a \in \Ascr} c_a \mid x = \sum_{a\in \Ascr}c_a a,\enspace c_a \geq 0\enspace \forall a \in \Ascr\right\},
\end{equation*}
where $\Ascr$ is the atomic set for $x$. This gauge function is central to the formulation of convex optimization process that provably leads to solutions that have sparse decompositions in the sense of~\eqref{eq:decomposition}. The properties of gauges and their relationship with atomic sparsity have been well-studied in the literature and are outlined in Chandrasekaran et al.~\cite{chandrasekaran2012convex} and Fan et al.~\cite{fan2019alignment}.

The typical approach to the demixing problem is to combine $k$ separate gauge functions, each corresponding to one of the atomic sets $\{\Ascr_i\}_{i=1}^k$, as a weighted sum or similar formulations. We instead combine the $k$ separate gauge functions using a special-purpose convolution operation called polar convolution, that can reflect the additive structure of the superposition, as defined in~\eqref{eq:main-problem}.

\subsection{Polar convolution} \label{sec-polar-deconvolution-into-components}

For any two atomic sets $\Ascr_1$ and $\Ascr_2$, the polar convolution of the corresponding gauge functions is 
\[
  (\gauge\Aso\maxconv\gauge\Ast)(x)
  \coloneqq  \inf_{x_1,\,x_2}\ \max_{x=x_1+x_2}\set{\gauge\Aso(x_1),\ \gauge\Ast(x_2)}.
\]
The resulting function is the gauge to the vector sum
\[
  \Ascr_1+\Ascr_2 = \set{a_1+a_2 | a_1\in\Ascr_1,\ a_2\in\Ascr_2},
\]
which is confirmed by the identity
\begin{equation}\label{eq-polar-convolution-sum}
  \gauge\Aso\maxconv\gauge\Ast = \gauge_{\scriptscriptstyle\Ascr_1+\Ascr_2},
\end{equation}
see Friedlander et al.~\cite[Proposition~6.2]{friedlander2019polarconvolution} and \cref{fig:sum-sets}.
%  This convolution operation was first described by Rockafellar~\cite[Theorem~5.8]{rockafellar1970convex} as a convexity-preserving operation for general convex functions. Friedlander et al.~\cite{friedlander2019polarconvolution} specialized this operation to the family of gauge functions and describe significant properties that accrue when the convolution is used in that context.

\begin{figure}[t]
  \centering
  \includegraphics[width=.6\linewidth, page=1]{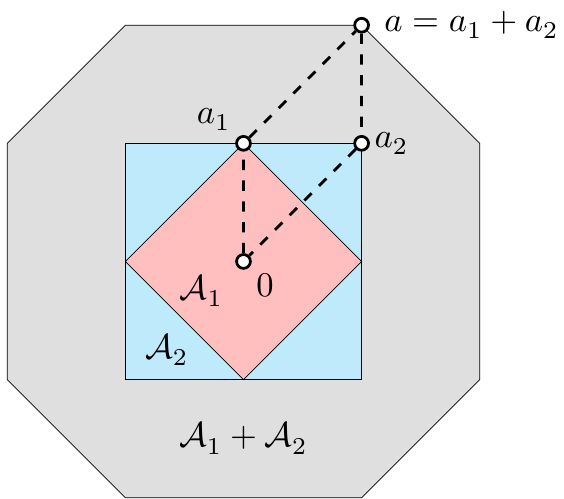}
  \caption{The sum of two atomic sets. The sum \(\Ascr_1+\Ascr_2\) is the unit level set for the polar convolution $\gauge\Aso\maxconv\gauge\Ast$, i.e., $\Ascr_1+\Ascr_2 = \set{a \mid \gauge\Aso\maxconv\gauge\Ast(a) \leq 1}$.\label{fig:sum-sets}}
\end{figure}

 The subdifferential properties of polar convolution facilitate our analysis and allow us to build an efficient algorithm that is practical for a range of problems. In particular, the polar convolution decouples under a duality correspondence built around the polarity of convex sets. The polar to a convex set $\Cscr\subset\Real^n$,
\[
  \Cscr\polar = \set{y \in\Re^n \mid \ip{x}{y}\leq 1 \mbox{ for all } x \in \Cscr},
\]
contains a dual description of $\Cscr$ in terms of all of its supporting hyperplanes. \change{Note that when $\Cscr$ is a convex cone, the constant $1$ in the above definition can be replaced with $0$.} Under this dual representation,
\[
  \gauge_{\scriptscriptstyle(\Ascr_1+\Ascr_2)\polar}
  = \gauge_{\scriptscriptstyle\Ascr_1\polar} + \gauge_{\scriptscriptstyle\Ascr_2\polar},
\]
which implies that the convex subdifferential decouples as 
$$\partial \gauge_{\scriptscriptstyle(\Ascr_1+\Ascr_2)\polar} = \partial\gauge_{\scriptscriptstyle\Ascr_1\polar}+\partial\gauge_{\scriptscriptstyle\Ascr_2\polar}.$$ Thus, a subgradient computation, which is central to all first-order methods for convex optimization, can be implemented using only subdifferential oracles for each of the polar functions $\gauge_{\scriptscriptstyle\Ascr_i\polar}$. We show in \cref{sec-algorithms} how to use this property to implement a version of the conditional gradient method~\cite{frank1956algorithm,jaggi2013revisiting} to obtain the polar decomposition using space complexity that scales linearly with the size of the data.

\subsection{Decompression and deconvolution}\label{sec:decompression-deconvolution}

The principle innovation of our approach to the demixing problem \eqref{eq:main-problem} is to decouple the recovery procedure into two stages: an initial \emph{decompression} stage meant to recover the superposition $\xagg$ from the vector of observations $b$, followed by a \emph{deconvolution} stage that separates the recovered superposition $\xagg$ into its constituent components $\{x\nai\}_{i=1}^k$. We couple the convex theory of polar convolution~\cite{friedlander2019polarconvolution} to the theory of statistical dimension and signal incoherence to derive a recovery procedure and analysis for demixing a compressively sampled mixture to within a prescribed accuracy.

\paragraph{Stage 1: Decompression} The initial decompression stage is based on the observation
that because each signal $x\nai$ is $\Ascr_i$ sparse, the superposition $\xagg$
must be sparse with respect to the weighted vector sum
\begin{equation}  \label{eq:weighted-atomic-sum}
\begin{aligned}
  \Ascr\subs &\coloneqq\sum_{i=1}^k\lambda_i\Ascr_i
  \\         &\equiv \Set{\sum_{i=1}^k\lambda_ia_i | a_i\in\Ascr_i\cup\set{0},\ i\in\irange1k}
\end{aligned}
\end{equation}
of the individual atomic sets $\Ascr_i$. The positive weights $\lambda_i$ carry information about the relative powers of the individual signals, and serve to equilibrate the gauge values of each signal. Thus, the weights $\lambda_i$ are defined so that for each $i\in\irange1k$,
\begin{equation}\label{eq-lambda-def}
  \gauge_{\scriptscriptstyle\lambda_i\Ascr_i}(x\nai)=\gauge\Aso(x\nao). 
\end{equation}
% $\gauge_{\scriptscriptstyle\lambda_i\Ascr_i}(x\nai)=\gauge\Aso(x\nao)$ for all $i=1,\ldots,k$.
The initial decompression stage solves the convex optimization problem
\begin{equation} \tag{P1} \label{eq:decompression}
  \minimize{x\in\Re^n}\enspace \gauge\Ass(x) \enspace\st\enspace \|Mx - b\|_2 \leq \alpha,
\end{equation}
where the parameter $\alpha\ge0$ bounds the acceptable level of misfit between the linear model \(Mx\) and the observations $b$, and correspondingly reflects the anticipated magnitude of the noise $\eta$. It follows from~\eqref{eq-polar-convolution-sum} that the objective of \eqref{eq:decompression} is in fact the polar convolution of the individual weighted gauges:
\begin{equation*}
   \gauge_{\scriptscriptstyle\Ascr\subs}(x)
   = \gauge_{\scriptscriptstyle\lambda_1\Ascr_1}\maxconv\gauge_{\scriptscriptstyle\lambda_2\Ascr_2}\maxconv\cdots\ \maxconv\gauge_{\scriptscriptstyle\lambda_k\Ascr_k}(x).
\end{equation*}
\Cref{thm-stability} establishes conditions under which the solution $x\subs^*$ to~\eqref{eq:decompression} stably approximates the superposition~$\xagg$.

\paragraph{Stage 2: Deconvolution} The solution $x\subs^*$ of the decompression problem \eqref{eq:decompression} defines the subsequent convex deconvolution problem
\begin{equation} \tag{P2} \label{eq:deconvolution}
  \begin{array}{ll}
  \minimize{x_1, \ldots, x_k} &
  \max_{i\in1:k}\, \gauge_{\scriptscriptstyle\lambda_i\Ascr_i}(x_i)
\\\st & 
  \textstyle\sum_{i = 1}^k x_i = x\subs^*
  \end{array}
\end{equation}
to obtain approximations $x_i^*$ to each constituent signal $x\nai$. 

\begin{table*}[tb]
  \begin{center}
  \begin{tabular}{lc@{\quad}c@{\quad}c@{\quad}c@{\quad}c} 
  \toprule
            & Compressed   & Noisy        & Number& Recovery  & Explicit \\
  Reference & measurements & observations & of signals  & algorithm & error bound\\\midrule
  McCoy et al.~\cite{mccoy2014convexity} & \xmark & \xmark & $2$  &\cmark  &\xmark\\
  McCoy and Tropp~\cite{mccoy2014sharp} & \xmark & \xmark & $2$ & \xmark & \cmark \\
  Oymak and Tropp~\cite{oymak2017universality} & \cmark & \xmark & $2$ &\xmark & \cmark \\
  McCoy and Tropp~\cite{mccoy2013achievable} & \cmark & \cmark & $\geq 2$ &\xmark & \xmark \\
  This paper & \cmark & \cmark & $\geq 2$ & \cmark & \cmark \\\bottomrule 
  \end{tabular}
  \end{center}
  \caption{Comparison of the main mathematical results obtained by this paper and related references. Only this paper and McCoy and Tropp~\cite{mccoy2013achievable} consider the case of two or more signals.} \label{tab:comparasion}
\end{table*}

In both stages, a variant of the conditional-gradient method provides a
computationally and memory efficient algorithm that can be implemented with
storage proportional to the number of measurements $m$~\cite{fan2019alignment}. We
describe in \cref{sec-algorithms} the details of the method.

\subsection{Related work}\label{sec:1.2}

The history of signal demixing can be traced to early work in seismic imaging~\cite{claerbout1973robust} and morphological component analysis~\cite{starck2005morphological,bobin2007morphological}, which used 1-norm regularization to separate incoherent signals. More recently, McCoy and Tropp~\cite{mccoy2014sharp,mccoy2013achievable} and Oymak and Tropp~\cite{oymak2017universality} proposed a unified theoretical framework for signal demixing using modern tools from high-dimensional geometry.

McCoy et al.~\cite{mccoy2014convexity} analyzed the recovery guarantees of a convex program that can reconstruct $k = 2$ randomly-rotated signals from a full set of noiseless observations, i.e., $M$ is the identity matrix and $\norm{\eta}=0$. They also provided an ADMM-type algorithm for solving their proposed model. McCoy and Tropp~\cite{mccoy2014sharp} enhanced the error bound analysis under the same problem setting. McCoy and Tropp~\cite{mccoy2013achievable} subsequently extended this framework to demixing $k \geq 2$ randomly-rotated signals from noisy measurements, as modeled by~\eqref{eq:main-problem}. However, the constants in the recovery error bound are not made explicit. We postpone to \cref{sec:comparasion} a detailed comparison between our theoretical results and theirs. Oymak and Tropp~\cite{oymak2017universality} considered a demixing problem similar to~\eqref{eq:deconvolution} that also incorporates the measurement operator $M$, and provided guarantees for demixing two unknown vectors from random and noiseless measurements. We build on this line of work by providing explicit recovery error bounds in terms of the complexity of the signal sets and the number of measurements. Our analysis allows for any number of individual signals $k\ge2$. Moreover, we provide a memory-efficient algorithm for solving our proposed model. \Cref{tab:comparasion} compares main mathematical results obtained by this paper and the above references.

Early work on demixing sparse signals implicitly assumed some notion of incoherence between representations of the signals. This concept was made concrete by Donoho and Huo~\cite{doh01}, and subsequently Donoho and Elad~\cite{doe03}, who measured the mutual incoherence of finite bases via the maximal inner-products between elements of the sets. Related incoherence definitions appear in compressed sensing~\cite{tro04, maleki2009coherence} and robust PCA~\cite{candes2011robust, wright2013compressive}. In this paper we adopt McCoy and Tropp's~\cite{mccoy2013achievable} notion of incoherence as the minimal angle between conic representation of the individual signals.

\subsection{Roadmap and contributions}

\Cref{sec:2} shows that the decompression problem \eqref{eq:decompression} can stably recover the superposition $x\nag$. \Cref{thm:tropp} characterizes the recovery error in terms of the overall complexity of the signal, provided the measurements are random. This result follows directly from Tropp~\cite{tropp2015convex} and a conic decomposition property particular to polar convolution. \Cref{sec:3} shows that the deconvolution problem~\eqref{eq:deconvolution} can stably approximate each constituent signal $x\nai$. The bound in the recovery error is given in terms of the error in the initial decompression process and the incoherence between signals as measured by the minimum angle between conic representations of each signal; see \cref{thm-stability}. This result requires a general notion of incoherence based on the angle between descent cones, first analyzed by McCoy and Tropp~\cite{mccoy2013achievable}.  \Cref{sec-incoherence} shows how a random-rotation model yields a particular level of incoherence with high probability; see \cref{thm:Incoherence}. In that section we also develop the recovery guarantee under the random-rotation model; see \cref{corollary-stability}.  \Cref{sec-algorithms} outlines an algorithm based on conditional-gradient and level-set methods for computing the decompression and deconvolution process. The worst-case computational complexity of this process is sublinear in the required accuracy. \Cref{sec:6} describes numerical experiments on real and synthetic structured signals.

Proofs of all mathematical statements are given in Appendix~\ref{sec:7}.

In summary, the contributions of this paper are as follows.
\begin{itemize}
  \item We develop a two-stage, decompression-deconvolution approach for compressed signal demixing that is motivated by the polar convolution of gauge functions; see~\cref{sec:decompression-deconvolution}.
  \item Under the assumption of Gaussian measurements and randomly rotated signals, we develop explicit signal-recovery error and sample complexity bounds for the two-stage approach; see~\cref{corollary-stability}. These are the first known explicit error bounds for recovering an arbitrary number of mixed and compressed signals. 
  \item We propose an algorithm based on conditional-gradient and level-set method to solve our proposed model; see~\cref{alg-level-set} and~\cref{alg-vanilla-fw}. Our implementation is publicly available at \url{https://github.com/MPF-Optimization-Laboratory/AtomicOpt.jl}. 
  \item Extensive numerical experiments on synthetic and real data, described in \cref{sec:6}, verify the correctness of our theoretical results and the effectiveness of our approach.
\end{itemize}

\change{\subsection{Notation and assumption}

Throughout this paper, we use capital Roman letters $A, B,\ldots,$ to denote matrices or linear operators; lowercase Roman letters $a, b, \ldots,$ to denote vectors; calligraphic letters $\Ascr, \Bscr, \ldots,$ to denote sets; and lowercase Greek letters $\alpha, \beta, \ldots,$ to denote scalars. The 2-norm of a vector $z\in\Real^n$ is denoted by $\|z\|_2 = \sqrt{\ip{z}{z}}$, and for any convex set $\Cscr$, $$\proj_{\Cscr}(z) \coloneqq \argmin{x \in \Cscr}\ \|x - z\|_2$$ gives the unique orthogonal projection of $z$ onto $\Cscr$. Let $\cone(\Cscr) \coloneqq \set{\alpha x | \alpha \geq 0,\ x \in \Cscr}$ denote the convex cone generated by $\Cscr$.  For any atomic set $\Ascr \subseteq \Re^n$, let $\Dscr(\Ascr, z) \coloneqq \cone\set{d\in \Re^n \mid \gauge\As(z + d) \leq \gauge\As(z)}$ denote the descent cone of $\Ascr$ at $z$. The face of $\Ascr$ exposed by $z$ is the set
\begin{equation*}
  \Fscr(\Ascr, z) = \conv\Set{x\in\Ascr | \ip x z = \sup_{u\in\Ascr}\ip u z},
\end{equation*} 
which is the convex hull of all elements in $\Ascr$ that lie on the supporting hyperplane defined by the normal $z$.

Let $\Nscr(0,I)$ denote the standard Gaussian distribution. For a convex cone $\Dscr$, let $\delta(\Dscr) \coloneqq \mE_{g\sim\Nscr(0,I)} \|\proj_{\Dscr}(g)\|_2^2 $ denote the statistical dimension of $\Dscr$. For any compact set $\Cscr$, let $\Uscr(\Cscr)$ denote the uniform distribution over $\Cscr$. 

The following blanket assumption holds throughout the paper.
\begin{assumption}[Measurement model]\label{assume-blanket}
  The linear model~\eqref{eq:main-problem} satisfies the following conditions: the linear map
  $M:\Re^n\to\Re^m$ has i.i.d.\@ standard Gaussian entries; the noise vector $\eta$ satisfies $\|\eta\|_2\leq \alpha$ for some scalar $\alpha$; and the relative signal powers $\{\lambda_i\}_{i=1}^k$ satisfy~\eqref{eq-lambda-def}.
\end{assumption} 

}

%%%%%%%%%%%%%%%%%%%%%%%%% SECTION 2 %%%%%%%%%%%%%%%%%%%%%%%%%%%%%%%%%%%

\section{Decompressing the superposition}\label{sec:2} 

\begin{figure}[t]
  \centering
 \includegraphics[width=.8\linewidth, page=7]{illustrations} 
\caption{A non-trivial intersection of $\Dscr(\Ascr, x\na)$ and $\Null(M)$ is required for successful decompression. The blue shaded region represents the shifted descent cone $x\na + \Dscr(\Ascr, x\na)$, and red line represents the shifted null space $\Null(M) + x\na$. If $\Dscr(\Ascr, x\na) \cap \Null(M) \neq \{0\}$ (as depicted here) then there exists a vector $\hat x$ such that  $\gauge\As(\hat x) < \gauge\As(x\na)$ and $M\hat x = Mx\na$.\label{fig:descent-cone}}
\end{figure}

As shown in \Cref{sec:decompression-deconvolution}, under the assumption that the individual signals $x\nai$ are $\Ascr_i$ sparse, the superposition $\xagg$ is sparse with respect to the aggregate atomic set $\Ascr\subs$.
Thus, the decompression of the observations $b$ in \eqref{eq:main-problem} is accomplished by minimizing the gauge to $\Ascr\subs$ to within the bound on the noise level $\|\eta\|_2\le\alpha$, as modeled by the recovery problem \eqref{eq:decompression}.
Without noise (i.e, $\alpha=0$), the aggregate signal $\xagg$ is the unique solution to \eqref{eq:decompression} when the null space of the measurement operator $M$ has only a trivial intersection with the descent cone $\Dscr\subs\coloneqq\Dscr(\Ascr\subs, \xagg)$. 
In other words, $\xagg$ is the unique solution of \eqref{eq:decompression} if and only if 
\begin{equation}\label{eq-unique-optimality}
  \Dscr\subs\cap\Null(M) = \{0\}.  
\end{equation}
\cref{fig:descent-cone} illustrates the geometry of this optimality condition, and depicts a case in which it doesn't hold.

If the linear operator $M$ is derived from Gaussian measurements, Gordon~\cite{gordon1988milman} characterized the probability of the event~\eqref{eq-unique-optimality} as a function of the Gaussian width of the descent cone $\Dscr\subs$ and the number of measurements $m$. This result is the basis for recovery guarantees developed by Chandrasekaran et al.~\cite{chandrasekaran2012convex} and Tropp~\cite{tropp2015convex} for a convex formulation similar to \eqref{eq:decompression}.

Intuitively, the number of measurements required for stable recovery of the superposition $\xagg$ depends on the total complexity of the $k$ constituent $\Ascr_i$-sparse vectors $x\nai$. The complexity is measured in terms of the statistical dimension of each of the descent cones $\Dscr_i$.
% : for any convex cone $\Dscr\subset\Real^n$, its statistical dimension is given by 
% \begin{equation*}
%   \delta(\Dscr) = \mE \big[\, \|\proj_{\Dscr}(g)\|_2^2 \,\big],
% \end{equation*}
% where $g$ is standard normal random vector and  $\proj_\Dscr$ is the orthogonal projection onto the cone $\Dscr$. 
Tropp~\cite[Corollary~3.5]{tropp2015convex} established a bound on the recovery error between the solutions of the decompression problem~\eqref{eq:decompression} and the superposition $x\nag$ that depends on the statistical dimension $\delta(\Dscr\subs)$ of its descent cone. The following proposition is a restatement of Tropp~\cite[Corollary~3.5]{tropp2015convex} applied to the decompression problem \eqref{eq:decompression}.
\begin{proposition}[Stable decompression of the aggregate]%
  \label{thm:tropp}
  For any $t>0$, any solution $x^*$ of \eqref{eq:decompression} satisfies
  \begin{equation*}
      \|x^* - \xagg\|_2 \leq 2\alpha\left[\sqrt{m - 1} - \sqrt{\delta(\Dscr\subs)} - t\right]_+^{-1}
  \end{equation*}
  with probability at least $1 - \exp(-t^2/2)$, where $[\xi]_+=\max\{0,\xi\}$. 
\end{proposition}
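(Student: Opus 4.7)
The plan is to derive the bound as a direct application of Tropp's deterministic error estimate~\cite[Corollary~3.5]{tropp2015convex} together with Gordon's concentration bound for the minimum conic singular value of a Gaussian matrix. The argument splits into a short deterministic step that places the error in the descent cone $\Dscr\subs$ and controls its image under~$M$, and a probabilistic step that bounds the conic singular value of $M$ restricted to~$\Dscr\subs$.

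First I would argue feasibility: by \cref{assume-blanket} the noise satisfies $\|\eta\|_2 \le \alpha$, so $\xagg$ itself is feasible for~\eqref{eq:decompression}. Hence any optimal $x^*$ satisfies $\gauge\Ass(x^*) \le \gauge\Ass(\xagg)$, which is exactly the statement that the error vector $h \defd x^* - \xagg$ lies in the descent cone $\Dscr\subs = \Dscr(\Ascr\subs,\xagg)$. A triangle inequality, combined with $\|Mx^* - b\|_2 \le \alpha$ and $\|M\xagg - b\|_2 = \|\eta\|_2 \le \alpha$, then gives $\|Mh\|_2 \le 2\alpha$.

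The second step introduces the minimum conic singular value
\[
  \sigma\submin(M;\Dscr\subs) \defd \inf\set{\|Mu\|_2 \mid u \in \Dscr\subs,\ \|u\|_2 = 1},
\]
for which the inclusion $h \in \Dscr\subs$ yields the deterministic inequality $\|h\|_2\,\sigma\submin(M;\Dscr\subs) \le \|Mh\|_2 \le 2\alpha$. All that remains is to lower bound $\sigma\submin(M;\Dscr\subs)$ with high probability, and this is where Gordon's inequality in its modern (statistical-dimension) form enters: for a standard Gaussian matrix~$M$ and any convex cone~$\Dscr$ one has $\mE\,\sigma\submin(M;\Dscr) \ge \sqrt{m-1} - \sqrt{\delta(\Dscr)}$, and $M\mapsto\sigma\submin(M;\Dscr)$ is $1$-Lipschitz on Gaussian matrices, so concentration of measure supplies the tail $\sigma\submin(M;\Dscr\subs) \ge \sqrt{m-1} - \sqrt{\delta(\Dscr\subs)} - t$ with probability at least $1 - \exp(-t^2/2)$. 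Chaining the deterministic and probabilistic bounds (and taking the positive part to handle the case in which the lower bound is vacuous) produces the stated inequality.

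The only nontrivial ingredient is the probabilistic lower bound on $\sigma\submin(M;\Dscr\subs)$, but it is precisely the statement packaged in~\cite[Theorem~3.3 and Corollary~3.5]{tropp2015convex}; the present proposition is essentially a transcription of that corollary, with the aggregate atomic set $\Ascr\subs$, aggregate signal~$\xagg$, and aggregate descent cone $\Dscr\subs$ playing the roles of the generic atomic set, signal, and descent cone in Tropp's statement. No structural feature of polar convolution is used in the proof itself; polar convolution enters only implicitly, in that \eqref{eq-polar-convolution-sum} is what guarantees that the single gauge $\gauge\Ass$ is the correct object whose descent cone controls the decompression error.
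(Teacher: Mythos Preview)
Your proposal is correct and matches the paper's approach exactly: the paper does not give an independent proof but simply states that the proposition is a restatement of Tropp~\cite[Corollary~3.5]{tropp2015convex} applied to~\eqref{eq:decompression}, and the argument you outline (feasibility of $\xagg$, descent-cone membership of the error, triangle inequality for $\|Mh\|_2$, and Gordon's lower bound on the conic singular value) is precisely the content of that corollary.
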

The statistical dimension of $\Dscr\subs$ is in general challenging to compute. However, we show in \cref{sec:3.1} that when all the signals $\{x\nai\}_{i=1}^k$ are incoherent, a reasonable upper bound on $\delta(\Dscr\subs)$ can be guaranteed; see \cref{coro:bound_sta_dim}.

\change{As we can see from~\cref{thm:tropp}, the recovery error bound depends linearly on the noise level $\alpha$. This result relies on the assumption that the noise level is overestimated, i.e., $\alpha \geq \|\eta\|_2$, which is part of \cref{assume-blanket}. However, when the noise level is underestimated, i.e., $\alpha < \|\eta\|_2$, we can not provide any meaningful recovery error bound because even the ground-truth signal may not be feasible for the decompression problem \eqref{eq:decompression}. This limitation suggests that if the true noise level isn't known in practice, then we can start with a relative large $\alpha$, and keep reducing it until satisfactory results are obtained.}

%%%%%%%%%%%%%%%%%%%%%%%%% SECTION 3 %%%%%%%%%%%%%%%%%%%%%%%%%%%%%%%%%%%

\section{Deconvolving the components}\label{sec:3}

The second stage of our approach is the deconvolution stage which separates the recovered aggregate signal into its constituent components. In order to successfully separate the superposition $x\nag$ into its components $\{x\nai\}_{i=1}^k$ using the deconvolution problem \eqref{eq:deconvolution}, additional assumption on dissimilarity between the atomic representations of the individual signals is generally required. For example, it can be challenging to separate the superposition of two sparse signals or two low-rank signals without additional assumptions. We follow McCoy and Tropp~\cite{mccoy2013achievable}, and measure the dissimilarity between signal structures---and thus their incoherence---using the angles between corresponding descent cones.

% \begin{figure}[t]
%   \centering\small
%   \subfloat[$\Dscr_1\coloneqq\Dscr(\Ascr_1, x_1\na)$]{\includegraphics[page=3]{illustrations}}
% \hfil 
% \subfloat[$\Dscr_2\coloneqq\Dscr(\Ascr_2, x_2\na)$]{\includegraphics[page=4]{illustrations}}\\
% \subfloat[$d \in -\Dscr_1\cap\Dscr_2$]{\includegraphics[page=5]{illustrations}}
% \hfil 
% \subfloat[$x_1\na - d$ and $x_2\na + d$]{\includegraphics[page=6]{illustrations}} 
% \caption{The top row depicts two scaled atomic sets $\gauge\Aso(x_i\na)\cdot\Ascr_i$ and the corresponding descent cones $x_i\na + \Dscr_i$ (shifted to lie at $x_i\na$) for $i=1,2$. (c) The descent cones shifted to $x\na = x_1\na + x_2\na$, with $\Dscr_1$ negated; the vector $d$ lies in their intersection. (d) The vector $d$ descends on both scaled atomic sets, so that $\gauge\Aso(x_1\na - d) < \gauge\Aso(x_1)$ and $\gauge\Ast(x_2\na + d) < \gauge\Ast(x_2\na)$.\label{fig:angle_cone}}
% \end{figure}

\begin{figure*}[h]
  \centering
  \begin{tabular}{@{}cccc@{}}
    \includegraphics[width=.20\textwidth, page=3]{illustrations}
  & \includegraphics[width=.20\textwidth, page=4]{illustrations}
  & \includegraphics[width=.20\textwidth, page=5]{illustrations}
  & \includegraphics[width=.20\textwidth, page=6]{illustrations}
\\[6pt] \small (a) $\Dscr_1\coloneqq\Dscr(\Ascr_1, x_1\na)$ & (b) $\Dscr_2\coloneqq\Dscr(\Ascr_2, x_2\na)$ & (c) $d \in -\Dscr_1\cap\Dscr_2$ & (d) $x_1\na - d$ and $x_2\na + d$ 
  \end{tabular}
  \caption{The top row depicts two scaled atomic sets $\gauge\Aso(x_i\na)\cdot\Ascr_i$ and the corresponding descent cones $x_i\na + \Dscr_i$ (shifted to lie at $x_i\na$) for $i=1,2$. (c) The descent cones shifted to $x\na = x_1\na + x_2\na$, with $\Dscr_1$ negated; the vector $d$ lies in their intersection. (d) The vector $d$ descends on both scaled atomic sets, so that $\gauge\Aso(x_1\na - d) < \gauge\Aso(x_1)$ and $\gauge\Ast(x_2\na + d) < \gauge\Ast(x_2\na)$.\label{fig:angle_cone}}
\end{figure*}

% The motivation for the incoherence definition we adopt is best explained for the case where there are $k=2$.signals, and that the decompression problem \cref{eq:decompression} succeeded in returning a solution $x\subs^*$ equal to the $x\nag$---i.e., the decompression problem \cref{eq:decompression} has recovered the exact superposition.
To motivate the incoherence definition, consider the case where there are only $k=2$ signals $x\nao$ and $x\nat$. If the descent cones $-\Dscr_1$ and $\Dscr_2$ have a nontrivial intersection, then there exists a nonzero direction $d \in -\Dscr_1 \cap \Dscr_2$ such that $\gauge\Aso(x\nao - d)<\gauge\Aso(x\nao)$ and $\gauge\Ast(x\nat + d)<\gauge\Ast(x\nat)$, which contradicts the optimality condition required for $x\nao$ and $x\nat$ to be unique minimizers of \eqref{eq:deconvolution}.  Thus, deconvolution only succeeds if the descent cones have a trivial intersection, which can be characterized using angle between the descent cones. \Cref{fig:angle_cone} illustrates this geometry. 

Obert~\cite{obert1991angle} defined the angle between two cones $\Kscr_1$ and $\Kscr_2$ in $\Re^n$ as the minimal angle between vectors in these two cones.
%  Specifically, 
% \begin{equation} \label{eq:angle_cones}
%   \angle(\Kscr_1, \Kscr_2) = \inf\{\angle(x, y) \mid x \in \Kscr_1, y \in \Kscr_2\},
% \end{equation}
% where $\angle(x,y)$ denotes the angle between the vectors $x$ and $y$, which satisfies the equation
% \[
%   (\|x\|_2\cdot\|y\|_2) \cos\angle(x,y)= \ip{x}{y} \text{with} 0 \leq \angle(x,y) \leq \pi.
% \]
It follows that the cosine of the angle between two cones can be expressed as 
\begin{align*}
  \cos\angle(\Kscr_1, \Kscr_2) 
  = \sup\set{\ip{x}{y} | x \in \bar\Kscr_{1},\ y \in\bar\Kscr_{2}},
\end{align*}
where
\begin{equation*}
  \bar\Kscr_{i} := \Kscr_i\cap\mS^{n - 1},\ i=1,2.
\end{equation*}
For the general case where the number of signals $k\ge2$, a natural choice for a measure of incoherence between these structured signals is the minimum angle between the descent cone of a signal with respect to the remaining descent cones.  

\begin{definition} \label{def:incoherence}
  The pairs $\{(x\nai, \Ascr_i)\}_{i=1}^k$ are $\beta$-incoherent with $\beta\in(0,1]$ if for all $i\in\irange1k$,
  \[
    \cos\angle\left({-\Dscr_i},\, \sum_{j \neq i}\Dscr_j\right)
    \leq 1 - \beta. 
  \]
\end{definition}

We use the incoherence between descent cones to bound the error between the true constituent signals $\{x\nai\}_{i=1}^k$ and the solution set of the deconvolution problem~\eqref{eq:deconvolution}. This bound depends on the accuracy of the approximation $x\subs^*$ to the true superposition $x\nag$ and is shown in \cref{thm-stability}.
\begin{proposition}[Stable deconvolution]\label{thm-stability} 
  If the pairs $\{(x\nai, \Ascr_i)\}_{i=1}^k$ are $\beta$-incoherent for some $\beta\in(0,1]$, then any set of solutions $\{x_i^*\}_{i=1}^k$ of~\eqref{eq:deconvolution} satisfies for all $i\in\irange1k$
  \begin{align*}
    \|x_i^* - x\nai\|_2 \leq \|x\subs^*-\xagg\|_2/\sqrt{\beta},
  \end{align*}
  where $x\subs^*$ is any solution of~\eqref{eq:decompression}.
\end{proposition}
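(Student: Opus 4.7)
My plan is to combine a descent-cone containment argument with the $\beta$-incoherence inequality. Writing $d_i := x_i^* - x\nai$ and $e := x\subs^* - \xagg$, the aggregate identity $\sum_{i=1}^k d_i = e$ together with the containment $d_i \in \Dscr_i := \Dscr(\Ascr_i, x\nai)$ will allow a short quadratic-minimization argument to yield $\|d_i\|_2 \leq \|e\|_2/\sqrt{\beta}$.

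First I would verify that $d_i \in \Dscr_i$. Under \cref{assume-blanket}, $\|M\xagg - b\|_2 = \|\eta\|_2 \leq \alpha$, so $\xagg$ is feasible for the decompression problem~\eqref{eq:decompression}, and optimality of $x\subs^*$ gives $\gauge_{\Ascr\subs}(x\subs^*) \leq \gauge_{\Ascr\subs}(\xagg)$. The polar-convolution identity~\eqref{eq-polar-convolution-sum} applied at $\xagg = \sum_i x\nai$ yields $\gauge_{\Ascr\subs}(\xagg) \leq \max_i \gauge_{\lambda_i\Ascr_i}(x\nai) = \tau$, where $\tau$ denotes the common value of $\gauge_{\lambda_i\Ascr_i}(x\nai)$ imposed by~\eqref{eq-lambda-def}. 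The same identity shows that the optimal value of~\eqref{eq:deconvolution} equals $\gauge_{\Ascr\subs}(x\subs^*)$, so each $\gauge_{\lambda_i\Ascr_i}(x_i^*) \leq \gauge_{\Ascr\subs}(x\subs^*) \leq \tau = \gauge_{\lambda_i\Ascr_i}(x\nai)$; by the definition of the descent cone, this is precisely $d_i \in \Dscr_i$.

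Second, $\sum_{j \neq i}d_j \in \sum_{j \neq i}\Dscr_j$ follows from the per-component cone containment. Substituting $u \to -u$ in \cref{def:incoherence} rewrites the incoherence as $\ip{u}{v} \geq -(1-\beta)\|u\|_2\|v\|_2$ for every $u \in \Dscr_i$ and every $v \in \sum_{j\neq i}\Dscr_j$. Applying this with $u = d_i$, $v = \sum_{j\neq i}d_j$, and $u+v = e$, the expansion $\|e\|_2^2 = \|u\|_2^2 + 2\ip{u}{v} + \|v\|_2^2$ gives $\|e\|_2^2 \geq \|u\|_2^2 - 2(1-\beta)\|u\|_2\|v\|_2 + \|v\|_2^2$. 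Viewed as a quadratic in the scalar $\|v\|_2 \geq 0$, the right-hand side attains its minimum $(1-(1-\beta)^2)\|u\|_2^2 = (2\beta - \beta^2)\|u\|_2^2 \geq \beta\|u\|_2^2$ at $\|v\|_2 = (1-\beta)\|u\|_2$, so $\|e\|_2^2 \geq \beta \|d_i\|_2^2$, which is the desired bound after taking square roots.

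The step requiring the most care will be the descent-cone containment: it hinges on recognizing that $\xagg$ is feasible for the decompression stage---a direct consequence of the noise bound in \cref{assume-blanket}---and on two applications of~\eqref{eq-polar-convolution-sum} that propagate a single gauge inequality through the polar-convolution structure so that every deconvolution component inherits a descent property simultaneously. Once $d_i \in \Dscr_i$ is secured for all $i$, the remaining incoherence argument is a clean one-variable quadratic minimization.
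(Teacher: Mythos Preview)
Your proposal is correct and follows essentially the same two-step architecture as the paper: establish $d_i\in\Dscr_i$ for every $i$, then use the $\beta$-incoherence to bound $\|d_i\|_2$ by $\|e\|_2/\sqrt{\beta}$. Two minor differences are worth flagging. First, the paper simply asserts the descent-cone containment (``by the definition of descent cone''), whereas you supply the full justification via feasibility of $\xagg$ in~\eqref{eq:decompression} and two invocations of the polar-convolution identity; your argument is the one the paper would need to make rigorous, and it is correct (implicitly using \cref{prop-descent-cone-properties-b} to pass from $\Dscr(\lambda_i\Ascr_i,x\nai)$ to $\Dscr_i$). Second, the paper packages the norm inequality as a separate lemma (\cref{lemma:bound_norm}) proved by the algebraic rewrite $\|u+v\|^2=\beta(\|u\|^2+\|v\|^2)+(1-\beta)(\|u\|-\|v\|)^2\ge\beta\max\{\|u\|^2,\|v\|^2\}$, while you reach the same bound by minimizing the quadratic in $\|v\|_2$; your route even yields the slightly sharper intermediate constant $2\beta-\beta^2$ before discarding it. Both executions are equivalent.
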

In summary, a large angle between negation of a descent cone $-\Dscr_i$ and all the other descent cones---as reflected by a large incoherence constant $\beta$---corresponds a small error between each $x_i^*$ and the ground truth $x\nai$.

\subsection{Bound on \texorpdfstring{$\delta(\Dscr\subs)$}{Ds} under incoherence}\label{sec:3.1}
\Cref{thm:tropp} gives a stable recovery result for the decompression stage. However, the recovery bound depends on the the statistical dimension of $\Dscr\subs$, which is challenging to compute even when the statistical dimension of the individual descent cone $\Dscr_i$ is known. In this section, we show that the incoherence between the structured signals $\{x\nai\}_{i=1}^k$ is sufficient to establish an upper bound for $\delta(\Dscr\subs)$. 
We start with the $k = 2$ case. \cref{prop:bound_sta_dim} shows that if the angle between two cones is bounded, then the statistical dimension of the sum of these two cones is also bounded. 
\begin{proposition}[Bound on statistical dimension of sum]\label{prop:bound_sta_dim}
  Let $\Kscr_1$ and $\Kscr_2$ be two closed convex cones in $\Re^n$. If $\cos\angle(-\Kscr_1, \Kscr_2) \leq 1 - \beta$ for some $\beta \in (0, 1]$, then 
  \[\sqrt{\delta(\Kscr_1 + \Kscr_2)} \leq \tfrac{1}{\sqrt{\beta}}\left(\sqrt{\delta(\Kscr_1)} + \sqrt{\delta(\Kscr_2)} \right).\]
\end{proposition}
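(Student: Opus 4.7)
The plan is to prove a pointwise inequality comparing $\|\proj_{\Kscr_1+\Kscr_2}(g)\|_2$ to $\|\proj_{\Kscr_1}(g)\|_2$ and $\|\proj_{\Kscr_2}(g)\|_2$, and then pass to the $L^2$ norm against a standard Gaussian $g$. The key tool is the dual representation of the projection onto a closed convex cone,
\[
   \|\proj_{\Kscr}(g)\|_2 \;=\; \sup_{u \in \Kscr \cap \mS^{n-1}} \ip{g}{u},
\]
which follows from the Moreau decomposition $g = \proj_\Kscr(g) + \proj_{\Kscr\polar}(g)$ with the two summands orthogonal and $\ip{\proj_{\Kscr\polar}(g)}{u} \le 0$ for $u \in \Kscr$.

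The first step is a geometric lemma: for any $u_1 \in \Kscr_1$ and $u_2 \in \Kscr_2$,
\[
   \|u_1+u_2\|_2^2 \;\ge\; \beta\bigl(\|u_1\|_2^2+\|u_2\|_2^2\bigr).
\]
The incoherence hypothesis yields $\ip{-u_1}{u_2} \le (1-\beta)\|u_1\|_2\|u_2\|_2$, so expanding $\|u_1+u_2\|_2^2$ and applying $\|u_1\|_2^2+\|u_2\|_2^2 \ge 2\|u_1\|_2\|u_2\|_2$ gives the stated inequality. Consequently, any unit vector $u \in (\Kscr_1+\Kscr_2)\cap\mS^{n-1}$ admits a decomposition $u=u_1+u_2$ with $u_i \in \Kscr_i$ satisfying $\|u_1\|_2^2 + \|u_2\|_2^2 \le 1/\beta$.

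Next, for a fixed $g$, I bound the supremum defining $\|\proj_{\Kscr_1+\Kscr_2}(g)\|_2$. For $u = u_1+u_2$ as above, each component satisfies $\ip{g}{u_i} \le \|u_i\|_2 \|\proj_{\Kscr_i}(g)\|_2$ by the dual characterization applied to $u_i/\|u_i\|_2 \in \Kscr_i \cap \mS^{n-1}$. Summing and applying Cauchy--Schwarz,
\[
   \ip{g}{u} \;\le\; \sqrt{\|u_1\|_2^2+\|u_2\|_2^2}\;\sqrt{\|\proj_{\Kscr_1}(g)\|_2^2+\|\proj_{\Kscr_2}(g)\|_2^2},
\]
and invoking $\|u_1\|_2^2+\|u_2\|_2^2 \le 1/\beta$ yields the pointwise bound
\[
   \|\proj_{\Kscr_1+\Kscr_2}(g)\|_2^2 \;\le\; \tfrac{1}{\beta}\bigl(\|\proj_{\Kscr_1}(g)\|_2^2+\|\proj_{\Kscr_2}(g)\|_2^2\bigr).
\]

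Finally, I take expectation over $g\sim\Nscr(0,I)$ to get $\delta(\Kscr_1+\Kscr_2) \le \tfrac{1}{\beta}(\delta(\Kscr_1)+\delta(\Kscr_2))$, take square roots, and apply the elementary inequality $\sqrt{a+b} \le \sqrt{a}+\sqrt{b}$ for $a,b\ge 0$ to reach the claimed bound. The main obstacle is establishing the pointwise factor $1/\beta$ via the correct decomposition; once the geometric lemma is in place, the Cauchy--Schwarz step and the passage to $L^2$ are mechanical. A minor subtlety is that $\Kscr_1+\Kscr_2$ need not be closed, but the supremum-based definition of projection and the statistical dimension are unaffected by taking closure.
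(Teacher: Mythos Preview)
Your proof is correct and shares the paper's core ingredients: the same geometric norm inequality (your $\|u_1+u_2\|_2^2\ge\beta(\|u_1\|_2^2+\|u_2\|_2^2)$ is exactly the paper's \cref{lemma:bound_norm}) and the sup-characterization of $\|\proj_\Kscr(g)\|_2$. The execution differs in where Cauchy--Schwarz enters. The paper first enlarges the constraint $\|u_1+u_2\|_2\le1$ to separate boxes $\|u_i\|_2\le1/\sqrt{\beta}$, decouples the supremum, and then applies Cauchy--Schwarz at the expectation level (\cref{lemma:expectation}). You instead apply Cauchy--Schwarz pointwise in $g$ to obtain the deterministic inequality $\|\proj_{\Kscr_1+\Kscr_2}(g)\|_2^2\le\tfrac{1}{\beta}\bigl(\|\proj_{\Kscr_1}(g)\|_2^2+\|\proj_{\Kscr_2}(g)\|_2^2\bigr)$ and only then take expectations; this buys you the sharper intermediate bound $\delta(\Kscr_1+\Kscr_2)\le\tfrac{1}{\beta}\bigl(\delta(\Kscr_1)+\delta(\Kscr_2)\bigr)$ before the final $\sqrt{a+b}\le\sqrt a+\sqrt b$ step. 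One cosmetic point: the dual formula should take the supremum over $\Kscr\cap\mB^n$ rather than $\Kscr\cap\mS^{n-1}$ (the sphere version can be negative when $\proj_\Kscr(g)=0$), but your subsequent inequality $\ip{g}{u_i}\le\|u_i\|_2\,\|\proj_{\Kscr_i}(g)\|_2$ holds regardless, so the argument is unaffected.
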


This result generalizes to an arbitrary number of cones. 
\begin{corollary}[Bound on statistical dimension of sum under incoherence]\label{coro:bound_sta_dim}
   If the pairs $\{(x\nai, \Ascr_i)\}_{i=1}^k$ are $\beta$-incoherent for some $\beta\in(0,1]$, then
   \[\sqrt{\delta(\Dscr\subs)} \leq \beta^{-\tfrac{k-1}{2}}\sum_{i=1}^k\sqrt{\delta(\Dscr_i)}.\]
\end{corollary}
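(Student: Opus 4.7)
The plan is to prove the claim by induction on $k$, with \cref{prop:bound_sta_dim} serving as the base case $k=2$. Two auxiliary facts drive the argument: (i) $\beta$-incoherence is inherited by subsets of the index set, and (ii) $\Dscr\subs\subseteq\sum_{i=1}^{k}\Dscr_i$.

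For (i), every $\Dscr_j$ is a convex cone and therefore contains the origin, so dropping any term from a Minkowski sum of descent cones produces a subset. Monotonicity of the supremum defining $\cos\angle(-\Dscr_i,\cdot)$ in its second argument then shows that $\beta$-incoherence of the full $k$-tuple descends to every subset of indices. In particular, $\{(x\nai,\Ascr_i)\}_{i=1}^{k-1}$ is $\beta$-incoherent, and the induction hypothesis gives
\[
  \sqrt{\delta\!\left(\textstyle\sum_{j<k}\Dscr_j\right)} \leq \beta^{-(k-2)/2}\sum_{j<k}\sqrt{\delta(\Dscr_j)}.
\]
Splitting $\sum_{i=1}^{k}\Dscr_i = \Dscr_k + \sum_{j<k}\Dscr_j$ and applying \cref{prop:bound_sta_dim} with $\Kscr_1=\Dscr_k$ and $\Kscr_2=\sum_{j<k}\Dscr_j$---whose cosine-angle hypothesis is the $i=k$ instance of $\beta$-incoherence---combined with the inequality $\beta^{-1/2}\leq\beta^{-(k-1)/2}$ (valid for $\beta\in(0,1]$ and $k\geq 2$) yields $\sqrt{\delta(\sum_i\Dscr_i)}\leq\beta^{-(k-1)/2}\sum_i\sqrt{\delta(\Dscr_i)}$.

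For (ii), let $d\in\Dscr\subs$, so that $\gauge\Ass(\xagg+td)\leq\gauge\Ass(\xagg)$ for some $t>0$. Since $\xagg=\sum_i x\nai$ is itself a feasible decomposition in the polar-convolution formula for $\gauge\Ass$, the weight equilibration~\eqref{eq-lambda-def} yields $\gauge\Ass(\xagg+td)\leq\max_i\gauge_{\scriptscriptstyle\lambda_i\Ascr_i}(x\nai)=\gauge\Aso(x\nao)$. Choosing a (near-)optimal decomposition $\xagg+td=\sum_i z_i$ then forces $\gauge_{\scriptscriptstyle\lambda_i\Ascr_i}(z_i)\leq\gauge_{\scriptscriptstyle\lambda_i\Ascr_i}(x\nai)$ for every $i$, so $d_i := (z_i-x\nai)/t$ lies in $\Dscr_i$ (the descent cones of $\Ascr_i$ and of $\lambda_i\Ascr_i$ at $x\nai$ coincide by positive homogeneity of the gauge) and $d=\sum_i d_i$. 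Monotonicity of the statistical dimension under inclusion of convex cones then transfers the bound on $\delta(\sum_i\Dscr_i)$ to $\delta(\Dscr\subs)$, completing the proof.

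The main obstacle is (ii): verifying the containment $\Dscr\subs\subseteq\sum_i\Dscr_i$ requires unwinding the polar-convolution infimum and leveraging the equilibration~\eqref{eq-lambda-def}. A minor subtlety is whether that infimum is attained; if not, one argues with $\varepsilon$-optimal decompositions and passes to a limit using closedness of the descent cones. By contrast, the induction itself reduces to arithmetic once \cref{prop:bound_sta_dim} is in hand.
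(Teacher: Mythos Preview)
Your proposal is correct and follows essentially the same route as the paper: establish the containment $\Dscr\subs\subseteq\sum_{i}\Dscr_i$ (which the paper packages as \cref{prop-descent-cone-properties-d} together with the equilibration in \cref{assume-blanket}), then iterate \cref{prop:bound_sta_dim} on the partial sums $\sum_{j\le i}\Dscr_j$. The paper unrolls this recursion explicitly rather than phrasing it as induction, but the arithmetic is identical; your observation that $\beta$-incoherence descends to sub-tuples (via $\sum_{j<i}\Dscr_j\subseteq\sum_{j\ne i}\Dscr_j$) is exactly the mechanism the paper uses to justify each application of \cref{prop:bound_sta_dim}. One small clarification: the quantity you induct on is really $\sqrt{\delta(\sum_{i}\Dscr_i)}$, not $\sqrt{\delta(\Dscr\subs)}$ itself, so the induction hypothesis you invoke is the auxiliary bound on the Minkowski sum of the first $k-1$ descent cones rather than the corollary applied to a smaller signal family---your write-up implicitly does this, but it would read more cleanly if stated as such.
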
 
\cref{coro:bound_sta_dim} shows that when the pairs $\{(x\nai, \Ascr_i)\}_{i=1}^k$ are $\beta$-incoherent, $\delta(\Dscr\subs)$ can be upper bounded in terms of the statistical dimension of individual descent cones.

%%%%%%%%%%%%%%%%%%%%%%%%% SECTION 4 %%%%%%%%%%%%%%%%%%%%%%%%%%%%%%%%%%%

\section{Inducing incoherence through random rotation}\label{sec-incoherence}

\Cref{thm-stability} establishes the stability of the deconvolution problem in the case that the unknown signals are $\beta$-incoherent, as formalized in \cref{def:incoherence}. However, except in very special cases like randomly rotated signals, it is not feasible to determine the incoherence constant $\beta$. We build on McCoy and Tropp's random rotation model~\cite{mccoy2013achievable} to quantify, with high probability, the $\beta$-incoherence of $k$ randomly-rotated atomic sparse signals, and present a recovery result for a randomly rotated case.

We first consider a simpler case of two general cones, one of which is randomly rotated. Let $\SO(n)$ denote the special orthogonal group, which consists of all $n$-by-$n$ orthogonal matrices with unit determinant. The following proposition provides a probabilistic bound on the angle between the two cones in terms of their statistical dimension. This geometric result maybe of intrinsic interest in other contexts. 
\begin{proposition}[Probabilistic bound under random rotation] \label{prop-angle-cones}
   Let $Q$ is drawn uniformly at random from $\SO(n)$. Let $\Kscr_1$ and $\Kscr_2$ be two closed convex cones in $\Re^n$. For any $t \geq 0$, we have
\begin{align*}
  \mP\bigg[\cos\angle(\Kscr_1, Q\Kscr_2) &\geq  \tfrac{3}{\sqrt{n}}\left(\sqrt{\delta(\Kscr_1)} + \sqrt{\delta(\Kscr_2)}\right) + t\bigg]
  \\&\leq \exp(-\tfrac{n-2}{8}t^2).
\end{align*}
\end{proposition}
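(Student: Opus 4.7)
The plan is to package the quantity of interest as a $1$-Lipschitz function on $\SO(n)$, apply the Gromov-Milman concentration inequality, and then control the mean via a Gaussian comparison. First, using the formula $\cos\angle(\Kscr_1,\Kscr_2)=\sup\{\ip{x}{y}\mid x\in\bar\Kscr_1,\,y\in\bar\Kscr_2\}$ stated in the preamble, rewrite
\[
  F(Q)\;\coloneqq\;\cos\angle(\Kscr_1,Q\Kscr_2)=\sup_{x\in\bar\Kscr_1,\;y\in\bar\Kscr_2}\ip{x}{Qy}.
\]
For $Q_1,Q_2\in\SO(n)$, the bilinear estimate $|\ip{x}{(Q_1-Q_2)y}|\le\|Q_1-Q_2\|_{op}\cdot\|x\|_2\|y\|_2$ gives $|F(Q_1)-F(Q_2)|\le\|Q_1-Q_2\|_{op}\le\|Q_1-Q_2\|_F$, so $F$ is $1$-Lipschitz on $\SO(n)$ in the Frobenius metric.

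Next, I would invoke the Gromov-Milman/Meckes concentration inequality on $\SO(n)$ equipped with the Frobenius metric: for any $1$-Lipschitz $F$ and $Q$ uniform on $\SO(n)$,
\[
  \mP\bigl[F(Q)-\mE F(Q)\ge s\bigr]\le\exp\!\left(-\tfrac{(n-2)s^{2}}{8}\right),\qquad s\ge0.
\]
The Ricci curvature of $\SO(n)$ produces exactly the constant $(n-2)/8$ appearing in the proposition, so this step reduces matters to upper bounding $\mE F(Q)$.

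The crux is to establish $\mE F(Q)\le\tfrac{3}{\sqrt n}\bigl(\sqrt{\delta(\Kscr_1)}+\sqrt{\delta(\Kscr_2)}\bigr)$. The idea is a Gaussian surrogate. Let $G\in\Re^{n\times n}$ have i.i.d.\ $\Nscr(0,1)$ entries and let $g_1,g_2\sim\Nscr(0,I_n)$ be independent. Over $\bar\Kscr_1\times\bar\Kscr_2$, the Gaussian process $\ip{x}{Gy}$ has increments
\[
  \mE\bigl(\ip{x_1}{Gy_1}-\ip{x_2}{Gy_2}\bigr)^{2}=2-2\ip{x_1}{x_2}\ip{y_1}{y_2},
\]
which differs from $\mE(\ip{x_1-x_2}{g_1}+\ip{y_1-y_2}{g_2})^{2}=4-2\ip{x_1}{x_2}-2\ip{y_1}{y_2}$ by $-2(1-\ip{x_1}{x_2})(1-\ip{y_1}{y_2})\le0$. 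Sudakov-Fernique and the standard bound $w_s(\Kscr)\le\sqrt{\delta(\Kscr)}$ on the spherical Gaussian width then give $\mE\sup_{x,y}\ip{x}{Gy}\le\sqrt{\delta(\Kscr_1)}+\sqrt{\delta(\Kscr_2)}$. To transfer this to the Haar case, I would introduce $g\sim\Nscr(0,I_n)$ independent of $Q$ and exploit the distributional identity $\|g\|_2\cdot Qy\stackrel{d}{=}\tilde g\sim\Nscr(0,I_n)$ for any fixed unit $y$; this matches the full covariance of the process $\ip{x}{Gy}$ up to the scalar $\|g\|_2$. Using positive homogeneity of $F$ in $Q$, independence of $\|g\|_2$ and $Q$, and the lower bound $\mE\|g\|_2\ge\sqrt{n-1}$ (from $\mE\|g\|_2^{2}=n$ and $\Var\|g\|_2\le1$), the Gaussian supremum bound transfers with a loss factor of the order $1/\sqrt{n-1}\le3/\sqrt n$. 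Combining with the concentration inequality at $s=t$ and replacing $\mE F(Q)$ by its upper bound yields the stated tail estimate.

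\textbf{Main obstacle.} The delicate step is the transfer from Gaussian to Haar in the mean bound. The processes $\ip{x}{Qy}$ and $\ip{x}{Gy}/\sqrt n$ share identical first and second moments, but $\|g\|_2 Qy$ is not jointly Gaussian in $y$, so Sudakov-Fernique does not apply directly after scaling; one must either invoke a non-Gaussian comparison tailored to log-concave Haar processes, or run a generic-chaining argument that uses the $\SO(n)$ concentration bound pointwise for each fixed $y$ and pays the Gaussian width $\sqrt{\delta(\Kscr_2)}$ for the supremum over $y\in\bar\Kscr_2$. This is also where the absolute constant $3$ (rather than $1$) appears, absorbing the $\sqrt{n/(n-1)}$ factor together with any constants incurred by the chaining step.
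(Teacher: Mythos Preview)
Your overall architecture---package $\cos\angle(\Kscr_1,Q\Kscr_2)$ as a $1$-Lipschitz function of $Q$, apply Meckes' concentration on $\SO(n)$ with constant $(n-2)/8$, and reduce to bounding the mean by a Gaussian comparison---is exactly the paper's three-step proof. Your Sudakov--Fernique (Chevet) bound on $\mE\sup_{x,y}\ip{x}{Gy}$ is also what the paper uses for the Gaussian piece.

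The gap you flag is real, and your scaling device $\|g\|_2\,Qy\stackrel{d}{=}\tilde g$ does not close it: it gives the right marginal for each fixed $y$ but not the joint law in $(x,y)$, so neither Sudakov--Fernique nor a direct $1/\mE\|g\|_2$ transfer is available. The paper sidesteps this entirely by invoking Tropp's comparison principle for \emph{convex} functions of random orthogonal versus Gaussian matrices (Tropp, ``A comparison principle for functions of a uniformly random subspace''), which gives
\[
  \mE_{Q\sim\Uscr(\Oscr_n)}\bigl[\hat f(Q)\bigr]\;\le\;\tfrac{1.5}{\sqrt{n}}\,\mE_{G}\bigl[\hat f(G)\bigr]
\]
directly, with no chaining. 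Two details you do not have: (i) the paper works with the ball-intersection version $\hat f(W)=\sup\{\ip{x}{Wy}:x\in\Kscr_1\cap\mB^n,\,y\in\Kscr_2\cap\mB^n\}$ rather than your spherical $F$, because $\hat f\ge0$ (since $0$ lies in both index sets) and $F\le\hat f$; (ii) Tropp's principle compares $\Oscr_n$ to Gaussian, not $\SO(n)$, so the paper splits $\Oscr_n$ into the two determinant components and uses nonnegativity of $\hat f$ to absorb a factor of $2$ when passing to $\SO(n)$. This explains the constant: $3=1.5\times 2$, not a $\sqrt{n/(n-1)}$ correction.

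So your ``non-Gaussian comparison tailored to log-concave Haar processes'' is precisely the missing ingredient, and it already exists as a clean black box; the generic-chaining route you sketch is unnecessary.
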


We now assume that the $k$ structured signals $x\nai$ are defined via a random rotations of $k$ underlying structured signals $\hat x_i$.
\begin{assumption}[Random rotations]\label{assume-random-rotation}
   Fix $\hat x_i$ and $\hat \Ascr_i$ for $i\in\irange1k$ such that $\hat x_i$ is sparse with respect to atomic set $\hat\Ascr_i$. For each $i\in\irange1k$, assume 
  \begin{equation*}
    x\nai \coloneqq Q_i  \hat x_i \quad\mbox{and}\quad \Ascr_i \coloneqq  Q_i \hat\Ascr_i,
  \end{equation*}
  where the matrices $Q_i$ are drawn uniformly and i.i.d.\@ from $\SO(n)$.
\end{assumption}

Our next proposition 
% shows that under random rotation, the statistical dimension of sum of descent cones is bounded with high probability. 
% \begin{proposition} \label{prop:bound-sum-sta-dim}
%   Under \cref{assume-random-rotation}, if $\sum_{i=1}^k \sqrt{\delta(\Dscr_i)} \leq \left(1 - 4^{- \scaleto{\tfrac{1}{k-1}}{10pt} } - t\right)\sqrt{n} / 6$, then 
%   \[\mP\left[ \sqrt{\delta(\Dscr\subs)} \leq 2\sum_{i=1}^k \sqrt{\delta(\Dscr_i)}\right] \geq 1 - (k-1)\exp(-\tfrac{n-2}{8}t^2).\]
% \end{proposition}
% The following corollary to~\cref{prop-angle-cones} 
shows that, under mild conditions, randomly rotated structured signals are incoherent with high probability.
\begin{proposition} \label{thm:Incoherence}
    Suppose that \cref{assume-random-rotation} holds. If $\sum_{i=1}^k \sqrt{\delta(\Dscr_i)} \leq \left(1 - 4^{- \scaleto{\tfrac{1}{k-1}}{10pt} } - t\right)\sqrt{n} / 6$ for some $t>0$, then the rotated pairs $\{(x\nai,\Ascr_i)\}_{i=1}^k$ are $4^{- \scaleto{\tfrac{1}{k-1}}{10pt} }$-incoherent with probability at least $1 - k(k-1)\exp(-\tfrac{n-2}{8}t^2)$.
\end{proposition}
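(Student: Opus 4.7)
The plan is to establish $\cos\angle(-\Dscr_i, \sum_{j\neq i}\Dscr_j) \leq 1 - 4^{-1/(k-1)}$ for each $i$ individually and then union bound. The workhorse is \cref{prop-angle-cones}, but since it measures only the angle between two cones (one of which must be randomly rotated), I first use \cref{prop:bound_sta_dim} iteratively---with pairwise angle bounds that are themselves instances of \cref{prop-angle-cones}---to control $\delta(\sum_{j\neq i}\Dscr_j)$, and then invoke \cref{prop-angle-cones} one last time on the outer angle.

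Fix $i$ and enumerate $\{1,\ldots,k\}\setminus\{i\}$ as $j_1,\ldots,j_{k-1}$; set $\Sscr_\ell = \sum_{m=1}^\ell \Dscr_{j_m}$, $u_\ell = \sqrt{\delta(\Sscr_\ell)}$, and $v_m = \sqrt{\delta(\Dscr_{j_m})}$. The induction to maintain is $u_\ell \leq 4^{(\ell-1)/(2(k-2))}\sum_{m=1}^\ell v_m$, which (since the exponent never exceeds $1/2$ for $\ell \leq k-1$) in particular gives $u_\ell \leq 2\sum_{m=1}^\ell v_m$. At step $\ell \in \{2,\ldots,k-1\}$, \cref{prop-angle-cones} conditional on $Q_{j_1},\ldots,Q_{j_{\ell-1}}$---valid because $\Dscr_{j_\ell} = Q_{j_\ell}\hat\Dscr_{j_\ell}$ is a fresh, independent rotation---yields with probability at least $1 - \exp(-(n-2)t^2/8)$
\[
\cos\angle(-\Sscr_{\ell-1}, \Dscr_{j_\ell}) \leq \tfrac{3}{\sqrt{n}}(u_{\ell-1} + v_\ell) + t.
\]
Substituting the inductive bound $u_{\ell-1} \leq 2\sum_{m=1}^{\ell-1}v_m$ and invoking the proposition's hypothesis $\sum_j\sqrt{\delta(\Dscr_j)} \leq (1 - 4^{-1/(k-1)} - t)\sqrt{n}/6$ shows this is at most $1 - 4^{-1/(k-1)} \leq 1 - 4^{-1/(k-2)}$, so \cref{prop:bound_sta_dim} applies with $\beta = 4^{-1/(k-2)}$ to give $u_\ell \leq 4^{1/(2(k-2))}(u_{\ell-1}+v_\ell)$, which closes the induction.

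At $\ell = k-1$ the exponent is exactly $1/2$, so $u_{k-1} \leq 2\sum_{j\neq i}\sqrt{\delta(\Dscr_j)}$. I then apply \cref{prop-angle-cones} once more, conditional on $\{Q_j\}_{j\neq i}$, to $-\Dscr_i = Q_i(-\hat\Dscr_i)$ and the now-fixed cone $\Sscr_{k-1}$:
\[
\cos\angle(-\Dscr_i, \textstyle\sum_{j\neq i}\Dscr_j) \leq \tfrac{3}{\sqrt{n}}\bigl(\sqrt{\delta(\Dscr_i)} + u_{k-1}\bigr) + t \leq \tfrac{6}{\sqrt{n}}\sum_{j=1}^k\sqrt{\delta(\Dscr_j)} + t \leq 1 - 4^{-1/(k-1)}
\]
by the hypothesis. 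Each $i$ thus uses $k-2$ chain events plus one outer event---$k-1$ invocations of \cref{prop-angle-cones}---and a union bound over all $k$ choices of $i$ gives the claimed failure probability $k(k-1)\exp(-(n-2)t^2/8)$.

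The main obstacle is picking the intermediate incoherence constant so that the induction on $u_\ell$ closes. Using $4^{-1/(k-2)}$ rather than the target $4^{-1/(k-1)}$ in the chain is what makes the identity $4^{(k-2)/(2(k-2))}=2$ collapse the accumulated bound to $u_{k-1} \leq 2\sum_{j\neq i}\sqrt{\delta(\Dscr_j)}$, after which the coefficient $6 = 2\cdot 3$ in the outer step matches exactly the $\sqrt{n}/6$ scaling in the hypothesis. The remaining subtlety is that the chain's angle condition $\cos\angle \leq 1 - 4^{-1/(k-2)}$ is supplied not by a direct bound but by the looser $\cos\angle \leq 1 - 4^{-1/(k-1)}$, which suffices because $4^{-1/(k-1)} \geq 4^{-1/(k-2)}$.
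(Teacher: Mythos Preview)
Your proof is correct and follows essentially the same route as the paper: the paper packages your inductive chain bounding $\sqrt{\delta(\sum_{j\neq i}\Dscr_j)}\le 2\sum_{j\neq i}\sqrt{\delta(\Dscr_j)}$ as a standalone lemma (applied with $p=k-1$ cones, hence the intermediate constant $4^{-1/(k-2)}$), then combines it with one final application of \cref{prop-angle-cones} and the same union bound over the $k(k-1)$ events. Your inlining of that lemma, including the choice of $4^{-1/(k-2)}$ for the chain so that the accumulated factor collapses to $2$ at $\ell=k-1$, matches the paper's argument exactly.
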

\cref{thm:Incoherence} requires $\sum_{i=1}^k \sqrt{\delta(\Dscr_i)}$ to scale as $\sqrt{n}$ and thus controls the total complexity of the $k$ unknown signals. We now state the main theorem and show that randomly rotated vectors can be recovered using the two-stage approach~\eqref{eq:decompression} and~\eqref{eq:deconvolution}.
\begin{theorem} \label{corollary-stability}
   Suppose that \cref{assume-blanket,assume-random-rotation} hold. For any $t_1, t_2 > 0$, if $\sum_{i=1}^k \sqrt{\delta(\Dscr_i)} \leq \left(1 - 4^{- \scaleto{\tfrac{1}{k-1}}{10pt} } - t_2\right)\sqrt{n} / 6$, then any set of minimizers $\{x_i^*\}_{i=1}^k$ of~\eqref{eq:deconvolution} satisfies 
 \begin{equation} \label{eq:theory_bound}
   \|x_i^* - x\nai\|_2 
    \leq
    \frac{4\alpha}{\left[\sqrt{m-1} - c\sum_{i=1}^k \sqrt{\delta(\Dscr_i)} - t_1\right]_+}
 \end{equation}
for all $i\in\irange1k$ with probability at least $$1 - \exp\left(-t_1^2/2\right) - k(k-1)\exp(-\tfrac{n-2}{8}t_2^2)$$  with $c\leq2$.
\end{theorem}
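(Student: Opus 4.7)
The plan is to combine the four earlier results in sequence via a union bound. First I would set $\beta \coloneqq 4^{-1/(k-1)}$, which is precisely the incoherence level produced by \cref{thm:Incoherence}. Under the hypothesis $\sum_{i=1}^k \sqrt{\delta(\Dscr_i)} \leq (1 - \beta - t_2)\sqrt{n}/6$, that proposition guarantees the event
\[
  E_2 \coloneqq \bigl\{\{(x\nai,\Ascr_i)\}_{i=1}^k \text{ is } \beta\text{-incoherent}\bigr\}
\]
with probability at least $1 - k(k-1)\exp(-(n-2)t_2^2/8)$.

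On the event $E_2$, I would invoke \cref{coro:bound_sta_dim} to control the statistical dimension of the aggregate descent cone:
\[
  \sqrt{\delta(\Dscr\subs)} \leq \beta^{-(k-1)/2}\sum_{i=1}^k \sqrt{\delta(\Dscr_i)} = 2\sum_{i=1}^k \sqrt{\delta(\Dscr_i)},
\]
where the equality uses the telescoping identity $(4^{-1/(k-1)})^{-(k-1)/2} = 4^{1/2} = 2$, which (pleasantly) eliminates the $k$-dependence and yields the constant $c=2$ that appears in the theorem statement. Next, I would apply \cref{thm:tropp} with parameter $t_1$ to the decompression problem, obtaining on an independent event $E_1$ of probability at least $1 - \exp(-t_1^2/2)$ the bound
\[
  \|x\subs^* - \xagg\|_2 \leq \frac{2\alpha}{\bigl[\sqrt{m-1} - \sqrt{\delta(\Dscr\subs)} - t_1\bigr]_+}.
\]

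On the intersection $E_1 \cap E_2$, \cref{thm-stability} transfers this bound to the individual signals through the factor $1/\sqrt{\beta} = 4^{1/(2(k-1))} \leq 2$ (for $k \geq 2$), giving
\[
  \|x_i^* - x\nai\|_2
  \leq \frac{1}{\sqrt{\beta}}\cdot\frac{2\alpha}{\bigl[\sqrt{m-1} - 2\sum_{i=1}^k\sqrt{\delta(\Dscr_i)} - t_1\bigr]_+}
  \leq \frac{4\alpha}{\bigl[\sqrt{m-1} - 2\sum_{i=1}^k\sqrt{\delta(\Dscr_i)} - t_1\bigr]_+},
\]
where I have substituted the upper bound on $\sqrt{\delta(\Dscr\subs)}$ (which only shrinks the denominator) inside the positive-part bracket. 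A union bound on the complement of $E_1 \cap E_2$ then yields the claimed failure probability $\exp(-t_1^2/2) + k(k-1)\exp(-(n-2)t_2^2/8)$.

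The proof is essentially a composition of already-established pieces, so the only subtlety is verifying that the two constants collapse cleanly: the amplification $\beta^{-(k-1)/2}$ coming from the statistical-dimension bound and the amplification $\beta^{-1/2}$ coming from the stable-deconvolution bound must both be controlled uniformly in $k$. The specific choice $\beta = 4^{-1/(k-1)}$ (which is exactly what \cref{thm:Incoherence} provides) makes the first amplification exactly $2$ regardless of $k$ and the second at most $2$, and this is what produces the universal constant $c \leq 2$ and the prefactor $4\alpha$ in \eqref{eq:theory_bound}. No genuine obstacle arises beyond this bookkeeping.
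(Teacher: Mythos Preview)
Your proposal is correct and matches the paper's approach exactly: the paper states only that the proof ``follows directly from \cref{thm:tropp}, \cref{thm-stability}, \cref{coro:bound_sta_dim}, \cref{thm:Incoherence}, and the probability union bound,'' and you have spelled out precisely this composition, including the key observation that $\beta = 4^{-1/(k-1)}$ makes $\beta^{-(k-1)/2}=2$ and $\beta^{-1/2}\le 2$. One minor wording caveat: the event $E_1$ is not literally independent of $E_2$ (since $\Dscr\subs$ depends on the rotations), but your argument only uses the union bound, which is valid because \cref{thm:tropp} gives the conditional bound $\mP(E_1^c\mid Q_1,\dots,Q_k)\le\exp(-t_1^2/2)$ uniformly.
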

The proof follows directly from \cref{thm:tropp}, \cref{thm-stability}, \cref{coro:bound_sta_dim}, \cref{thm:Incoherence}, and the probability union bound. We verify empirically in \cref{sec:6.1} the tightness of the bound in~\eqref{eq:theory_bound}.

% We note that the sample complexity requirement for the number of measurement $m$ in \cref{corollary-stability} scales quadratically in the number of signal constituents $k$. However, we believe this would not be a problem in most of the practical applications as the number of structured signals is usually small and does not depend on other parameters. For example, in image inpainting, while the number of pixels in an image is large, typically only a few types of images are involved. In fact, it is hard for us to come up a real-life application that the number of signal constitutes grows in other parameters.

\subsection{Comparison of error bound} \label{sec:comparasion}
Here we compare our results to the one provided in \cite{mccoy2013achievable}, which also developed a novel procedure to solve the demixing problem \eqref{eq:main-problem}. McCoy and Tropp~\cite{mccoy2013achievable} introduced the constrained optimization problem
\begin{equation}
  \label{eq:achievable-model}
  \begin{array}{ll}
  \minimize{x_1,\ldots,x_k}
  & \left\|
    M^\dagger\left(M\sum_{i=1}^k x_i - b\right)
  \right\|_2 \\
 \st
  & \gauge(x_i) \le \gauge\Asi(x\nai), \ \forall i\in\irange1k,
  \end{array}
\end{equation}
where $M^\dagger$ is the Moore-Penrose pseudo-inverse of $M$. They showed that if \(n\ge m\ge \sum_{i=1}^k\delta(\Dscr_i)+\BigOh(\sqrt{kn})\) and $\{x\nai\}_{i=1}^k$ are randomly rotated as per \cref{assume-random-rotation}, then any set of minimizers $\{x_i^*\}_{i=1}^k$ of~\eqref{eq:achievable-model} satisfies with high probability the bound
\begin{equation}\label{eq:Tropp_error}
  \|x_i^*-x\nai\|_2\le C\|M^\dagger\eta\|_2
\end{equation}
for all $i\in\irange1k$~\cite[Theorem~A]{mccoy2013achievable}. To our knowledge, this result is the first to show that stable recovery of the constituent signals $\{x\nai\}_{i=1}^k$ is possible with high probability provided the number of measurement grow linearly in $k$. However, the constant $C$ in the error bound \eqref{eq:Tropp_error} could depend on all of the problem parameters except $\eta$. As a comparison to \cref{corollary-stability}, the error bound in \eqref{eq:theory_bound} makes explicit the effect of all problem parameters.

% On the other hand, their result still provides better sample complexity in a noiseless setting. It could be plausible that in a general signal demixing problem, there is a gap for optimal sample complexity between noiseless setting and noisy setting, which has been observed in other modern signal problems such as one-bit compressed sensing. It would be interesting to investigate this possible gap, but we leave it as future work.

%%%%%%%%%%%%%%%%%%%%%%%%% SECTION 5 %%%%%%%%%%%%%%%%%%%%%%%%%%%%%%%%%%%
\section{Decompression and deconvolution algorithm}\label{sec-algorithms}
\def\t{^{(t)}}
We describe a procedure for obtaining solutions for the decompression~\eqref{eq:decompression} and deconvolution~\eqref{eq:deconvolution} problems. The procedure first solves the decompression problem~\eqref{eq:decompression} using an algorithm that doesn't store or track an approximation to $x\nag$, which in many contexts may be too large to store or manipulate directly. Instead, the algorithm produces a sequence of iterates $r\t\coloneqq b-Mx\t$ that approximate the residual vector corresponding to an implicit approximation $x\t$ of $x\nag$. The procedure requires only the storage of several vectors of length $m$, which represents the size of the data $b$. As we show in \cref{sec-deconv-algo}, the solution to the deconvolution problem~\eqref{eq:deconvolution} is subsequently obtained via an unconstrained linear least-squares problem that uses information implicit in this residual vector. \cref{alg-level-set} summarizes the overall procedure. 

\newcommand{\vupp}{v_{\sf hi}}
\newcommand{\vlow}{\ell}

\begin{algorithm}[t]
 \DontPrintSemicolon
 \SetKwComment{tcp}{\tiny [}{]}
 \caption{Decompression and deconvolution algorithm\label{alg-level-set}}
 \smallskip
 \KwIn{noise level $\alpha>0$; accuracy $\epsilon>0$}
 $\tau^{0} \gets 0$

 %, \enspace \vupp^{0} \gets v(0)$\;
 \For(\tcp*[f]{\tiny level-set iterations}){\nllabel{alg-level-set-loop}$t\gets0,1,2,\ldots$}{
   $(r\t,\, p\t,\, \vlow\t)\leftarrow\texttt{DCG}(\tau\t)$ \tcp*{\tiny solve~\eqref{eq:value-fn} approximately}\nllabel{alg-level-set-cg}
   \lIf(\tcp*[f]{\tiny test $\epsilon$-infeasibility}){$\|r\t\| > \sqrt{\alpha^2 + \epsilon}$}{break}
   $\tau\tp1 \gets \tau\t + \frac{\vlow\t-\alpha^2/2}{\ip{p\t}{r\t}}$\nllabel{alg-level-set-update}
   \tcp*{\tiny Newton update}
 }
  $(x_1,\ldots,x_k) \gets \mbox{solve~\eqref{eq-primal-recovery}}$ \nllabel{alg-primal-recovery}
  \tcp*{\tiny solve \eqref{eq:deconvolution}}
 \Return{$(x_1,\ldots,x_k)$}
\end{algorithm}

\subsection{Level-set method} \label{sec:level}

The loop beginning at \cref{alg-level-set-loop}  of \cref{alg-level-set} describes the level-set procedure for solving the decompression problem \eqref{eq:decompression}~\cite{BergFriedlander:2008,berg2011sparse,aravkin2016levelset}. More specifically, it approximately solves a sequence of problems
\begin{equation} \label{eq:value-fn}
  v(\tau) \coloneqq \min_{x}\left\{\,\half\|Mx - b\|^2 \mid \gauge\Ass(x) \leq \tau\,\right\},
\end{equation}
parameterized by the scalar $\tau$ that defines the level-set constraint. The subproblem \eqref{eq:value-fn} is solved by the dual conditional gradient method (\cref{alg-level-set-cg} of \cref{alg-level-set}), introduced in \cref{sec:dcg}. 
% This loop implements the level-set approach, which constructs a monotonically-increasing sequence $\{\tau\t\}$ that converges to the leftmost root $\tau_*$ of the equation $v(\tau) = \half\alpha^2$,
% which represents the fidelity constraint of problem~\eqref{eq:decompression}. 
Under modest assumptions satisfied by this problem, the sequence $\tau\t\to\tau_*=\opt$, the optimal value of~\eqref{eq:decompression}. The tail of the resulting sequence of computed solutions to~\eqref{eq:value-fn} is super-optimal and $\epsilon$-infeasible for~\eqref{eq:decompression}, i.e., a solution $x$ satisfies 
\begin{equation} \label{eq:super-optimal}
  \gauge\Ass(x) \leq \opt \enspace\text{and}\enspace \|Mx - b\| \leq \sqrt{\alpha^2 + \epsilon},
\end{equation}
where $\epsilon$ is a specified optimality tolerance. The level-set algorithm requires $\BigOh(\log(1/\epsilon))$ approximate evaluations of the optimization problem~\eqref{eq:value-fn} to achieve this optimality condition. Each approximate evaluation provides a global lower-minorant of $v$ that is used by a Newton-like update to the level-set parameter $\tau\t$; see line~\ref{alg-level-set-update}.

\subsection{Dual conditional gradient method} \label{sec:dcg}

\IncMargin{.3em}
\begin{algorithm}[t]
  \DontPrintSemicolon
  \setcounter{AlgoLine}{0}
  \SetKwComment{tcp}{\tiny [}{\tiny ]}
  \caption{Dual conditional gradient method: \texttt{DCG}($\tau$). This algorithm solves~\eqref{eq:decompression} without reference to the primal iterate $x\t$, and instead returns the implied residual $r\t\equiv b-Mx\t$.\label{alg-vanilla-fw}}
  \KwIn{$\tau$}

  $r^{(0)} \gets b$;\ $q^{(0)} \gets 0$\;

  \For{$t\gets0,1,2,\ldots$}{
    %  $z\t\gets M^*r\t$ \tcp*{$r\t\equiv b-Mx\t$}
     $p\t\in\tau \Fscr(M\Ascr_s;\,r\t)$ \tcp*{\tiny see~\eqref{eq:face-identity}}\nllabel{algo-expose-atom}
     $\Delta r\t\gets p\t-q\t$ \tcp*{\tiny $\Delta r\t\equiv M(a\t-x\t)$}
     $\rho\t \gets \ip{r\t}{\Delta r\t}$\tcp*{\tiny optimality gap}
     \lIf(\tcp*[f]{\tiny break if optimal}){$\rho\t<\epsilon$}{
       break
     }
     $\theta\t \gets \min\set{1,\, \rho\t / \|\Delta r\t\|_2^2}$\tcp*{\tiny exact linesearch}
     $r\tp1 \gets r\t-\theta\t\Delta r\t$\tcp*{\tiny $r\tp1\equiv b-Mx\tp1$}
     $q\tp1 \gets q\t+\theta\t\Delta r\t$\tcp*{\tiny $q\tp1\equiv \phantom{b-{}}Mx\tp1$}
  }
  % $s\t\gets\ip{p\t}{r\t}$\;
  $\ell\t\gets\half\norm{r\t}^2-\rho\t$\tcp*{\tiny lower bound on optimal value}

  \Return{$r\t$, $p\t$, $\vlow\t$}
\end{algorithm}

The level-set subproblems are solved approximately using the dual conditional-gradient method described by \cref{alg-vanilla-fw}. An implementation of this algorithm requires storage for three $m$-vectors
\[
  p\t\coloneqq Ma\t, \quad q\t\coloneqq Mx\t, \quad r\t\coloneqq b-Mx\t,
\]
(The fourth vector $\Delta r\t$ can be easily operated on implicitly.) Implicit in these vectors are the iterate $x\t$ and current atom $a\t\in\Ascr_s$, which in some situations are prohibitively large to store or manipulate The main computational cost is in Line~\ref{algo-expose-atom}, which uses the residual $r\t$ to expose an atom in the face
$\Fscr(M\Ascr_s;\,r)$ of the mapped atomic set $M\Ascr\subs\subset\Real^m$. Because the exposed faces decompose under set addition, it follows from the expression~\eqref{eq:weighted-atomic-sum} of $\Ascr_s$ that $\Fscr(M\Ascr_s;\, r) = \sum_{i=1}^k\Fscr(\lambda_i M\Ascr_i;\,r)$.  Thus, the facial exposure operation on Line~\ref{algo-expose-atom} can be computed by separately exposing faces on each of the individual mapped atomic sets, which can be implemented in parallel, i.e.,
\begin{equation*}
    p\t = \tau\sum_{i = 1}^k \lambda_i p_i\t
    \text{where}
    p_i\t \in \Fscr(M\Ascr_i;\,r\t) \enspace\forall i\in\irange1k.
\end{equation*}

\change{%
Note that \cref{algo-expose-atom} of \cref{alg-vanilla-fw} can alternatively be implemented using the identity
\begin{equation}\label{eq:face-identity}
 \Fscr(M\Ascr_s;\,r) = M\Fscr(\Ascr_s;\,M^*r);
\end{equation}
see Fan et al.~\cite[Section 3]{fan2019alignment}. This formulation is convenient in cases where the operator $M$ can be applied implicitly to elements of the atomic set $\Ascr_s$.%
}

The conditional-gradient method converges to the required optimality within $\BigOh(1/\epsilon)$ iterations \cite{jaggi2013revisiting}. Combined with the complexity of the level-set method, we thus expect a total worst-case complexity of $\BigOh(\log(1/\epsilon)/\epsilon)$ iterations to satisfy the optimality condition~\eqref{eq:super-optimal}.

\subsection{Exposing the signals}\label{sec-deconv-algo}

Once \cref{alg-level-set} reaches Line~\ref{alg-primal-recovery}, the residual vector $r\t$ contains information about the atoms that are in the support of each of the approximations $x_i^*$ to the signals $x\nai$. It follows from Fan et al.~\cite[Theorem~7.1]{fan2019alignment} that for all $i\in\irange1k$,
\[
  x_i^*\in\cone\Fscr(\Ascr_i;\, M^*r^*), \quad r^* := b-M\sum_{i=1}^k x_i^*.
\]
Thus, a solution of the deconvolution problem \eqref{eq:deconvolution} can be recovered by solving 
\begin{equation} \label{eq-primal-recovery}
\begin{array}{ll}
\minimize{x_1,\ldots,x_k} & \half\|M\textstyle\sum_{i=1}^k x_i - (b-r^{(t)})\|^2 \\
\st & x_i\in\cone\Fscr(\Ascr_i;\,M^* r^{(t)}),
\end{array}
\end{equation}
which can be implemented as a standard linear least-squares problem over the coefficients of the atoms exposed in each of the atomic sets.

%%%%%%%%%%%%%%%%%%%%%%%%% SECTION 6 %%%%%%%%%%%%%%%%%%%%%%%%%%%%%%%%%%%
\section{Experiments and novel applications} \label{sec:6}

In~\cref{sec:6.1} we empirically verify \cref{corollary-stability} through a set of synthetic experiments on recovering multiple randomly-rotated sparse signals from noiseless and noisy measurements. Note that the random rotation guarantees incoherence among the unknown signals $\{x\nai\}_{i=1}^k$. We also empirically show that random rotation is not required for successful recovery of a class of unknown signals with different underlying structures. In~\cref{sec:6.2} we separate a sparse signal and sparse-in-frequency signal. In~\cref{sec:6.3} we separate the superposition of three signals: a sparse signal, a low-rank matrix, and noise. In~\cref{sec:6.4} we separate a multiscale low-rank synthetic image.

We implement the algorithm described in \cref{sec-algorithms} in the Julia language \cite{BEKS14}. All the experiments are conducted on a Linux server with 8 CPUs and 64Gb memory.

\subsection{Stability of Demixing} \label{sec:6.1}

We provide three experiments that numerically verify the bounds established by \cref{corollary-stability} to solve the demixing problem \eqref{eq:main-problem}. The experiment draws multiple realizations of a random problem specified over a range of parameters $k$ (number of signals), $m$ (number of measurements), $n$ (signal dimension) and $s$ (the sparsity level for each signal). Each signal $x\nai$ in~\eqref{eq:main-problem} is generated according to \cref{assume-random-rotation}, where each vector $x_i^\circ$ is $s$-sparse with respect to the standard basis. By construction, the atomic sets $i\in\irange1k$ are defined to be 
\[
  \Ascr_i = Q_i \set{\pm e_1, \dots, \pm e_n}\text{where} Q_i\sim\Uscr(\SO(n)).
\]
Amelunxen et al.~\cite[Proposition~4.5]{amelunxen2014living} give an upper bound on the statistical dimension of the descent cone for $(x\nai,\Ascr_i)$, and thus for the descent cone at $(x_i^\circ,\Ascr_i^\circ)$, for  $s$-sparse vectors. We use this bound to approximate the statistical dimension $\delta(\Dscr_i)$ of the descent cone $\Dscr_i$ corresponding to the pair $(x\nai,\Ascr_i)$.  We define the maximum absolute error
\begin{equation} \label{eq:rela_diff}
  \mathop{\tt maxerr}\coloneqq \max_{i\in\irange1k}\ \twonorm{x_i^*-x\nai}.
\end{equation}

\subsubsection{Relation between $m$ and $n$} \label{sec:phase_transition1}
We first show a phase portrait for the noiseless case that verifies the relationship between number of measurement $m$ and signal dimension $n$, as stated in \cref{corollary-stability}. The number of signals is fixed at $k=3$ and the sparsity level is fixed at $s=5$. The phase plot is shown in \cref{fig:phase_transition1}, where the horizontal axis represents the signal dimension $n\in\{50, 65, \dots, 500\}$ and the vertical axis represents the number of measurements $m\in\{50, 65, \dots, 500\}$. The colormap indicates the empirical probability of successful demixing over 50 trials, where we say the demixing is successful if $\mathop{\tt maxerr} < 10^{-2}$. The red solid curve and the blue dashed line, respectively, approximate the graphs of the functions
\[\sqrt{m} = \sum_{i=1}^k \sqrt{\delta(\Dscr_i)}\text{and}\sqrt{n} = \sum_{i=1}^k \sqrt{\delta(\Dscr_i)}.\]
The statistical dimensions of $\Dscr_i$ are approximated using ~\cite[Proposition~4.5]{amelunxen2014living}, as stated above. The area above the red curve and to the right of the dashed line corresponds to problem parameters with successful recovery and corroborates the bounds stated in  \cref{corollary-stability}.

\begin{figure}[t]
  \centering\small
  \includegraphics[width=.65\linewidth]{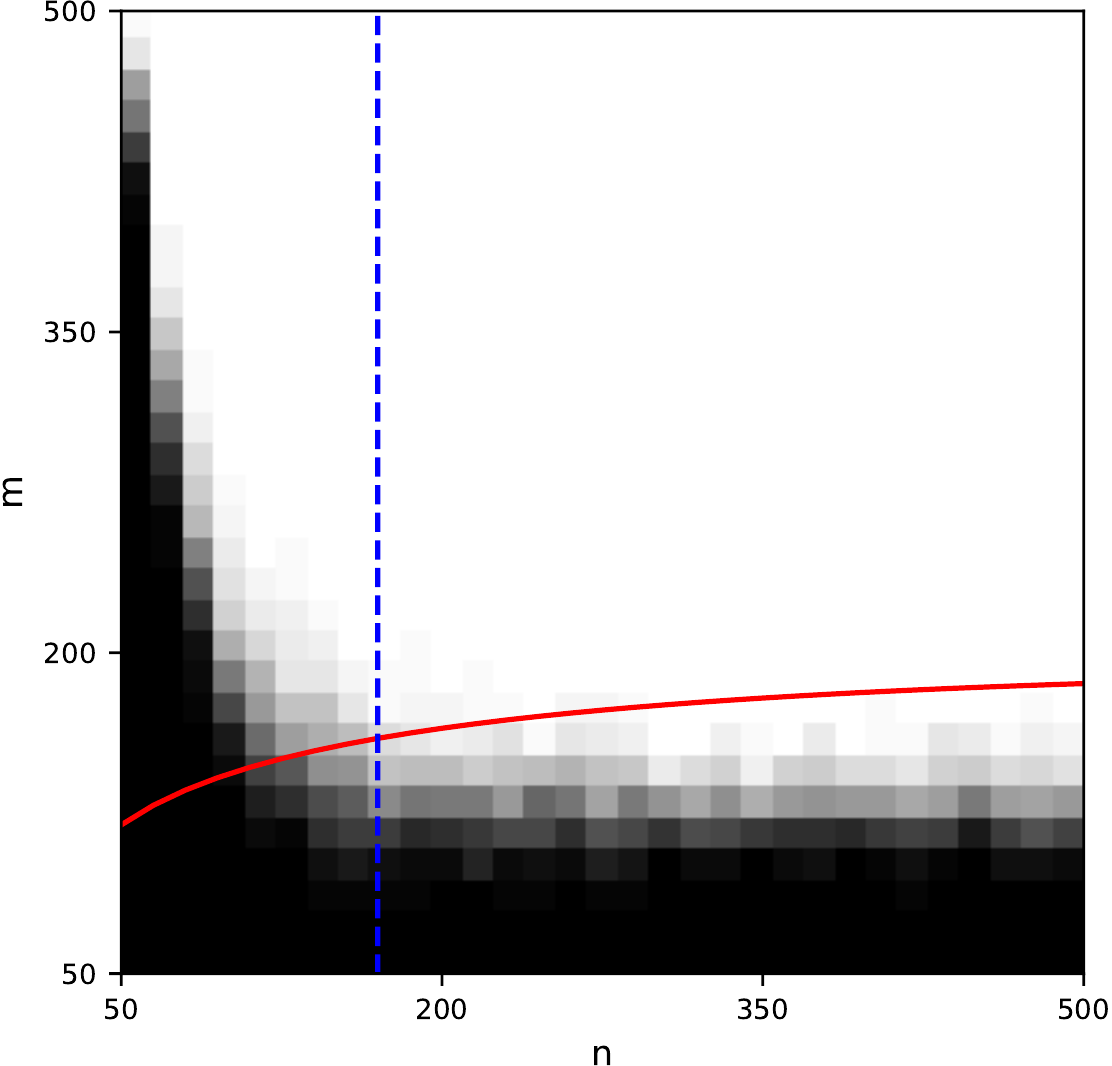}
  \caption{Phase-transition plots for demixing the sum of randomly-rotated sparse signals $\{x\nai\}_{i=1}^k$ from noiseless measurements $b$. The horizontal and vertical axes, respectively, represent the signal dimension $n$ and measurement dimension $m$. The colormap indicates the empirical probability of successful demixing over 50 trials. The red solid curve approximately represents the mapping $\sqrt{m} = \sum_{i=1}^k \sqrt{\delta(\Dscr_i)}$ and the blue dashed line approximately represents the position $\sqrt{n} = \sum_{i=1}^k \sqrt{\delta(\Dscr_i)}$.}
  \label{fig:phase_transition1}
\end{figure}

\subsubsection{Relation between $m$ and $k$}
We also show a phase portrait for the noiseless case that verifies the relationship between number of measurement $m$ and number of signals $k$ stated in \cref{corollary-stability}. The signal dimension is fixed at $n=1000$ and the sparsity level is fixed at $s=3$. The phase plot is shown in \cref{fig:phase_transition2}, where the horizontal axis represents the number of signals $k\in\{2, 3, \dots, 10\}$ and the vertical axis represents the number of measurements $m\in\{100, 200, \dots, 1000\}$. All the other settings are the same as stated in \cref{sec:phase_transition1}. The red line corresponds to $\sqrt{m} = \sum_{i=1}^k \sqrt{\delta(\Dscr_i)}$ and shows that recovery is possible provided the number of measurements scale as $k^2$, when the complexity of all of unknown signals are the same.  

\begin{figure}[t]
  \centering\small
  \includegraphics[width=.65\linewidth]{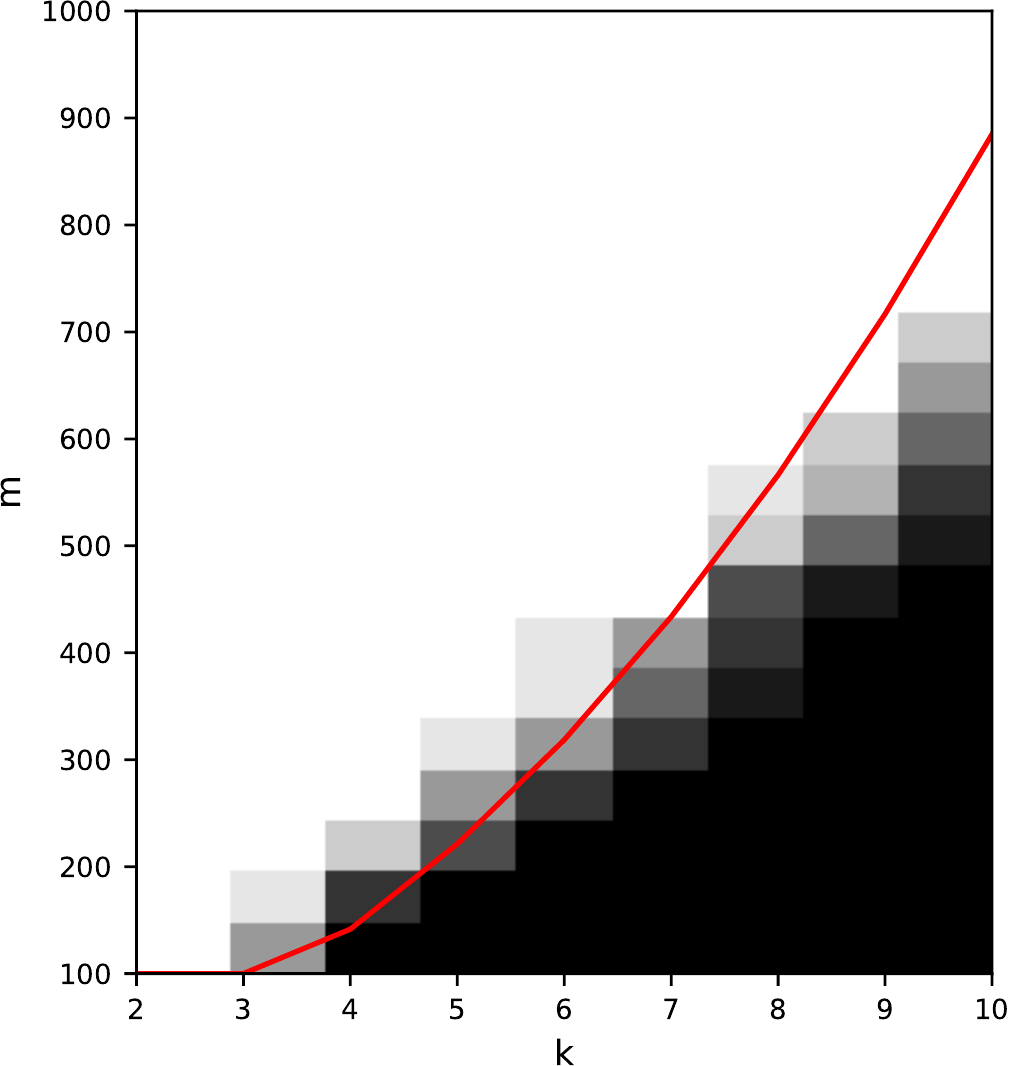}
  \caption{Phase-transition plots for demixing the sum of randomly-rotated sparse signals $\{x\nai\}_{i=1}^k$ from noiseless measurements $b$. The horizontal and vertical axes, respectively, represent the number of signals $k$ and measurement dimension $m$. The colormap indicates the empirical probability of successful demixing over 50 trials. The red solid curve approximately represents the mapping $\sqrt{m} = \sum_{i=1}^k \sqrt{\delta(\Dscr_i)}$.\label{fig:phase_transition2}}
\end{figure}

\subsubsection{Relation between maximal absolute error and noise level}
Lastly, we show a plot for the noisy case that verifies the relationship between maximum absolute error $\mathop{\tt maxerr}$ and noise level $\alpha$ stated in \cref{corollary-stability}. The number of measurement is fixed at $m=125$, the signal dimension is fixed at $n=200$, the number of signals is fixed at $k=3$, and the sparsity level is fixed at $s=5$. The result is shown in \cref{fig:phase_transition3}, where the horizontal axis represents the noise level $\alpha\in\{0.01, 0.02, \dots, 2\}$ and the vertical axis represents the maximum absolute error $\mathop{\tt maxerr}$. The blue curve corresponds to the mean of $\mathop{\tt maxerr}$ over 50 trials and the yellow shaded area corresponds to the standard deviation. The figure verifies the linear dependence of the recovery error with the noise level, as stated in \cref{corollary-stability}.

\begin{figure}[t]
  \centering\small
  \includegraphics[width=.65\linewidth]{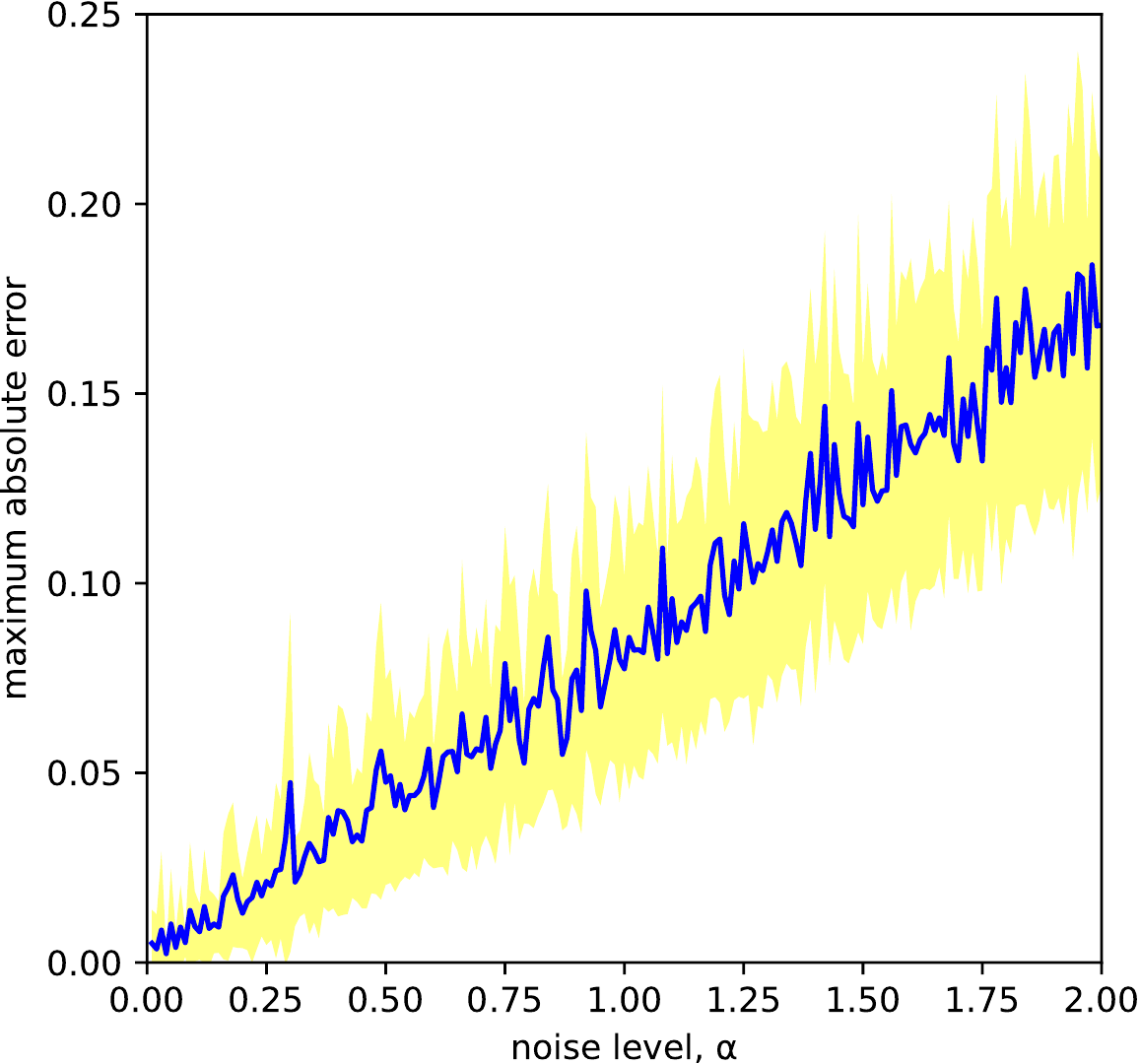}
  \caption{Error-noise plot for demixing the sum of randomly-rotated sparse signals $\{x\nai\}_{i=1}^k$ from noisy measurements $b$. The horizontal and vertical axes, respectively, represent the noise level $\alpha$ and the maximum absolute error $\mathop{\tt maxerr}$. The blue curve indicates the relationship between the empirical average of $\mathop{\tt maxerr}$ over 50 trials and $\alpha$, and the yellow shaded area indicated the empirical standard deviation. }
  \label{fig:phase_transition3}
\end{figure}

\subsection{Separation of sparse and sparse-in-frequency signals} \label{sec:6.2}

\begin{figure}[t]
  \centering
%   \begin{tabular*}{\linewidth}{@{}ccc@{}}
%     \includegraphics[width=.25\linewidth]{./}
%   & \includegraphics[width=.25\linewidth]{./}
%   & \includegraphics[width=.25\linewidth]{./}
% \\\small (a) original image & (b) extracted starfield & (c) extracted galaxy 
%   \end{tabular*}
 \includegraphics[width=.9\linewidth]{./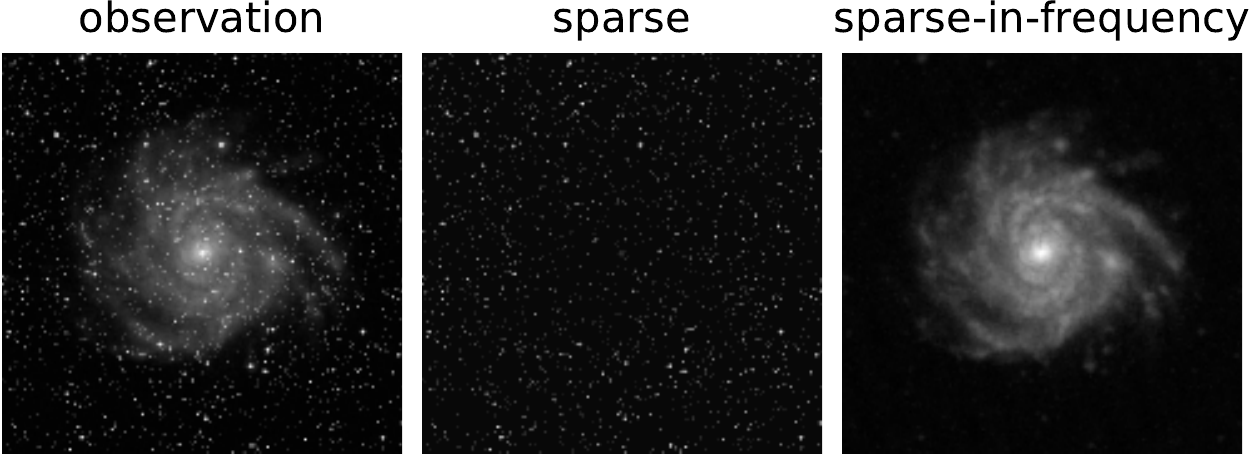}
  \caption{The star-galaxy separation experiment features two distinct signal components. The image size is $601\times601$ pixels.\label{fig:star_galaxy}}
\end{figure}

\change{We reproduce the experiments done by McCoy et al.~\cite{mccoy2014convexity} on separating an astronomical image into sparse and sparse-in-frequency signals.}
An $n$-vector $x$ is sparse-in-frequency if its discrete cosine transform (DCT) $Dx$ is sparse, where the orthogonal linear map $D:\Re^n\to\Re^n$ encodes the DCT. Define the observations and corresponding atomic sets
\[
  b = x\na_s + x\na_d,
  \quad
  \Ascr_s \coloneqq  \set{\pm e_1, \dots, \pm e_n}, \quad \Ascr_d = D^*\Ascr_s.
\]
The star-galaxy image shown in \cref{fig:star_galaxy} exemplifies this superposition: the stars are well-represented by sparse matrices in $\Ascr_s$, and the galaxy component is well-represented by sinusoidal elements in $\Ascr_d$. The image size is $601\times601$. The results of the separation are shown in the second two panels of \cref{fig:star_galaxy}.

\subsection{Sparse and low rank matrix decomposition with structured noise}\label{sec:6.3}

In this next example we decompose an image that contains a sparse foreground, a low-rank background, and structured noise. This is an example of sparse principle component analysis~\cite{fazel1998approximations,fhb01,pati1994phase,valiant1977graph}. Typically, the entry-wise 1-norm and the nuclear norm are used to extract from the matrix each of these qualitatively different structures. Here, we treat the noise as its own signal that also needs to be separated. We consider the observations
\[B = X\na_s + X\na_l + X\na_n,\]
where $X\na_s\in\Re^{m\times n}$ is sparse, $X\na_l\in\Re^{m\times n}$ is low-rank matrix, and $X\na_n\in\Re^{m\times n}$ represents structured noise so that $PX\na_nQ$ is sparse, where $P$ and $Q$ are random orthogonal $m$-by-$m$ matrices. Based on the atomic framework, we choose the atomic sets for $X\na_s$, $X\na_l$, and $X\na_n$, respective, as
\begin{align*}
    \Ascr_s &= \set{\pm E_{i,j} | 1 \leq i \leq m, 1 \leq j \leq n },
  \\\Ascr_l &= \set{uv^\intercal \mid u \in \Re^m,\ v \in \Re^n,\ \|u\|_2=\|v\|_2 = 1},
  \\\Ascr_n &= P^\intercal\Ascr_s Q^\intercal,
\end{align*}
where $E_{i,j}$ is a $m\times n$ matrix with a single nonzero entry $(i,j)$ with value $1$.

\change{%
Although the elements of the atomic sets $\mathcal{A}_s$, $\mathcal{A}_l$ and $\mathcal{A}_n$ are described as explicit matrices, these elements can be exposed and operated on implicitly without ever forming these matrices. Thus, \cref{algo-expose-atom} of \cref{alg-vanilla-fw} can be implemented efficiently for very large examples. In particular, let $Z$ be a fixed exposing matrix. Then the exposed atom from $\mathcal{A}_s$ can be computed by finding the entry in $Z$ with maximum absolute value; the exposed atom from $\mathcal{A}_l$ can be computed by finding the leading singular vectors of $Z$; and exposed atom from $\mathcal{A}_n$ can be computed by finding the entry in $PZQ$ with the maximum absolute value. Fan et al.~\cite{fan2019alignment} provide more detail on how to efficiently implement these operations.%
}

For the numerical experiment, we consider the noisy chess board in-painting problem. The chess foreground is sparse and the chess board background is low rank. The image size is $596\times596$. The experiment result is shown in~\cref{fig:chess_board}. 

\begin{figure}[t]
  \centering
  \includegraphics[width=.6\linewidth]{./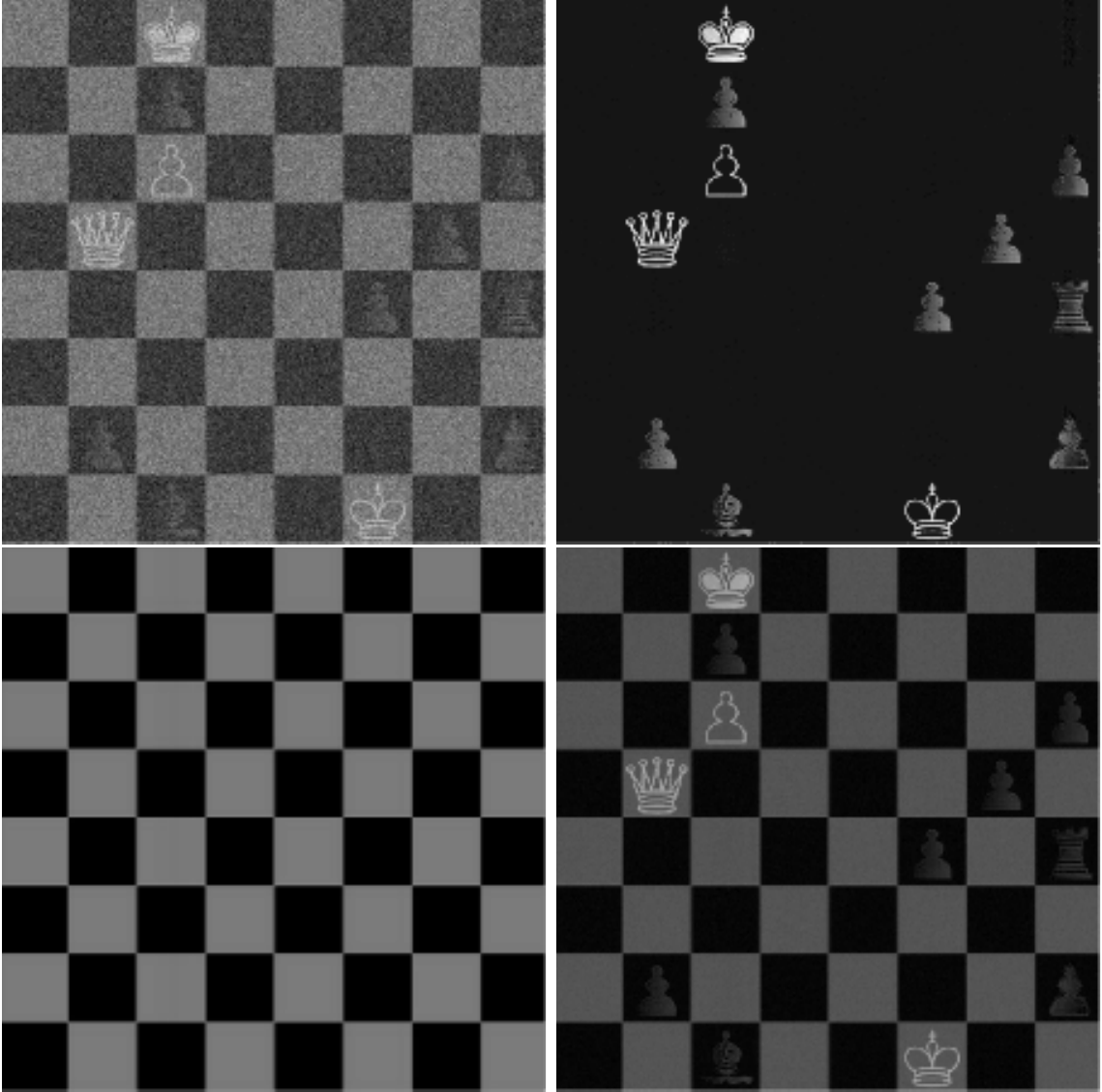}
  \caption{Noisy chess board in-painting experiment. The image size is $596\times596$. Northwest: noisy observations; Northeast: recovered sparse component; Southwest: recovered low rank component; Southeast: denoising result.}
  \label{fig:chess_board}
\end{figure}

\subsection{Multiscale low rank matrix decomposition} \label{sec:6.4}

\begin{figure}[t]
  \centering
  \includegraphics[width=\linewidth]{./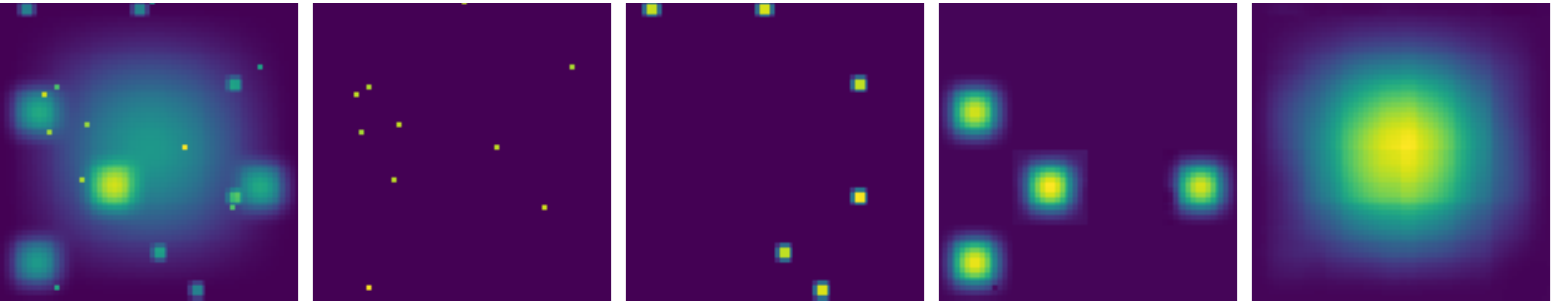}
  \caption{Multiscale low rank matrix decomposition experiment. The matrix size is $64\times64$. From left to right: observations; recovered $\Pscr_i$-block-wise low rank component for $i = 1,\dots,4$. All the blocks in $\Pscr_i$ have the same size $4^{i-1}\times4^{i-1}$ for $i = 1,\dots,4$.}
  \label{fig:multiscale}
\end{figure}

The multiscale low-rank matrix decomposition problem proposed by Ong and Lustig~\cite{ong2016beyond} generalizes the sparse and low-rank matrix decomposition through a block-wise low-rank structure. Let $X$ be an $m \times n$ matrix and $\Pscr$ be a partition of $X$ into multiple blocks. Then $X$ is considered to be block-wise low-rank with respect to $\Pscr$ if all the blocks are low rank. For each block $p\in\Pscr$ with size $m_p \times n_p$, let $X_p$ denote the corresponding part of the matrix $X$ and let $R_p: \Re^{m \times n} \to \Re^{m_p \times n_p}$ denote the linear operator that can extract $X_p$ from $X$, namely $R_p(X) = X_p$. The adjoint operator $R_p^*: \Re^{m_p \times n_p} \to \Re^{m \times n}$ embeds an $m_p \times n_p$ matrix into a $m \times n$ zero matrix. With this operator, 
\begin{equation*}
  X = \sum\limits_{p \in \Pscr} R_p^*(X_p).
\end{equation*}

Each block-wise low-rank signal is represented by a corresponding atomic set. By definition, each block $X_p \in \Re^{m_p \times n_p}$ is low rank, and thus $X_p$ is $\Ascr_p$-sparse, where 
\[
  \Ascr_p = \left\{uv^\intercal \mid u \in \Re^{m_p}, v \in \Re^{n_p}, \|u\| = \|v\| = 1\right\}.
\]
One and Lustig~\cite{ong2016beyond} propose a block-wise nuclear norm and its associated dual norm, respectively, by the functions 
\[
  \|\cdot\|_{\Pscr,1} = \textstyle \sum_{p \in \Pscr} \|R_p(\cdot)\|_1,
  \quad
  \|\cdot\|_{\Pscr,\infty} = \max_{p \in \Pscr} \|R_p(\cdot)\|_{\infty},
\]
where $\|\cdot\|_1$ and $\|\cdot\|_\infty$ are the Schatten 1- and $\infty$-norms of their matrix arguments. It follows that the block-wise norm $\|\cdot\|_{\Pscr,1}$ and dual norm $\|\cdot\|_{\Pscr,\infty}$ are the gauge and support functions, respectively, for the atomic set $\Ascr_\Pscr \coloneqq  \bigcup_{p \in \Pscr} R_p^*\Ascr_p$.

We reproduce the synthetic model described by Ong and Lustig, who construct the superposition $B = \sum_{i = 1}^k X\nai$,
where $X\nai \in \Re^{m \times n}$ is block-wise low rank with respect to the multiscale partitions $\set{\Pscr_i}_{i = 1}^k$. In our experiment, we set $m = n = 64$, $k = 4$, and for each $i\in\irange1k$,
\[m_p = n_p = 4^{i-1}  \quad \forall p \in \Pscr_i.\]
At the lowest scale $i=1$, a block-wise low-rank matrix is a scalar, and so 1-sparse matrices are included with the atomic set $\Ascr_{\Pscr_1}$.  The solutions of the deconvolution procedure \cref{eq:deconvolution} are shown in~\cref{fig:multiscale}.

%%%%%%%%%%%%%%%%%%%%%%%%% SECTION 8 %%%%%%%%%%%%%%%%%%%%%%%%%%%%%%%%%%%
\section{Looking ahead}\label{sec:8}

The random rotation model is a useful mechanism for introducing incoherence among the individual signals. However, even in contexts where it's possible to rotate the signals, it may prove too costly to do so in practice because the rotations need to be applied at each iteration of the algorithm in \cref{alg-level-set-cg}. We might then consider other mechanisms for introducing incoherence that are computationally cheaper, and rely instead, for example, on some fast random transform. The literature on demixing abounds with various incoherence notions. We wish to explore what is the relationship between these and our definition of $\beta$-incoherence. Alternative incoherence definitions may prove useful in deriving other mechanisms for inducing incoherence in the signals.

\change{%
A significant assumption of our analysis is that the parameters $\lambda_i$ exactly equilibrate the gauge values for each signal; cf.~\eqref{eq-lambda-def}. Analogous assumptions appear in many other related signal-demixing approaches \cite{mccoy2014convexity,mccoy2014sharp,oymak2017universality,mccoy2013achievable}. For example, McCoy and Tropp~\cite{mccoy2013achievable}, who also deal with the general case of recovering two or more signals, require the gauge values for each signal; cf.~\cref{eq:achievable-model}.  There are important practical cases where the parameters $\lambda_i$ are known, such as some secure communication problems, where the sender can normalize the signals before they are mixed~\cite[Section~1.3.1]{mccoy2014sharp}. In cases where parameters $\lambda_i$ are not known, however, these may be reasonably approximated by a grid search. An open research question is to analyze how the stability of the signal-recovery process depends on errors that might exist in the ideal parameter choices.%
}

\appendices
%%%%%%%%%%%%%%%%%%%%%%%%% Appendix %%%%%%%%%%%%%%%%%%%%%%%%%%%%%%%%%%%
\section{Proofs} \label{sec:7}
This section contains proofs for the mathematical statements in \cref{sec:3} and \cref{sec-incoherence}. We begin with several technical results needed for analysis, which describe useful properties of descent cones. Some of these results contain their own intrinsic interest.

\subsection{Lemmas} \label{sec:lemma}

\begin{lemma}[Properties of descent cones]\label{prop-descent-cone-properties} Let $\Ascr$, $\Ascr_1$, $\Ascr_2$ be compact sets in $\Re^n$ that contain the origin in their interiors. Fix the vectors $x, x_1, x_2$. The following properties hold.
   \begin{lemmaenum}[left=18pt]
    \item \label{prop-descent-cone-properties-a}
      A vector $d$ is contained in $\Dscr(\Ascr, x)$ if and only if there is some $\bar\alpha > 0$ such that $\gauge\As(x + \alpha d) \leq \gauge\As(x)$ for all $\alpha \in [0, \bar\alpha]$;
    \item \label{prop-descent-cone-properties-b}
      $\Dscr(\tau\Ascr, x) = \Dscr(\Ascr, x)\ \forall\tau > 0$;
    \item \label{prop-descent-cone-properties-c}
      $\Dscr(Q\Ascr, Qx) = Q\Dscr(\Ascr, x)$ if $Q\in \SO(n)$;
    \item \label{prop-descent-cone-properties-d}
      $\Dscr(\Ascr_1 + \Ascr_2, x_1 + x_2) \subseteq \Dscr(\Ascr_1, x_1) + \Dscr(\Ascr_2, x_2)$ if $\gauge\Aso(x_1) = \gauge\Ast(x_2)$.
    \end{lemmaenum}
\end{lemma}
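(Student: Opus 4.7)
The plan is to prove the four parts in order, with each building on the previous when useful. Part~(a) is the workhorse that lets us convert membership in the cone into a local one-sided descent property.

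For part~(a), I would argue both directions using convexity of the sublevel set $\Lscr = \{y : \gauge\As(y) \leq \gauge\As(x)\}$. If $d \in \Dscr(\Ascr,x)$, then $d = \alpha_0 d_0$ for some $\alpha_0 \geq 0$ with $x + d_0 \in \Lscr$. Since $x \in \Lscr$ and $\Lscr$ is convex (because $\gauge\As$ is a gauge), convex combinations $x + \alpha \alpha_0 d_0$ lie in $\Lscr$ for all $\alpha \in [0,1/\alpha_0]$, giving the desired $\bar\alpha = 1/\alpha_0$ (and trivially so when $\alpha_0 = 0$). The converse is immediate from the definition $\Dscr(\Ascr,x) = \cone\{d' : \gauge\As(x+d') \leq \gauge\As(x)\}$, since any $\alpha d$ in the generating set pulls $d$ into the cone.

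Parts (b) and (c) are one-line consequences of standard gauge identities. For (b), use $\gauge_{\tau\Ascr} = \tau^{-1}\gauge\As$, so the sublevel sets generating the two cones coincide. For (c), use $\gauge_{Q\Ascr}(Qy) = \gauge\As(y)$ for $Q \in \SO(n)$; then $d \in \Dscr(Q\Ascr, Qx)$ corresponds exactly to $Q^T d \in \Dscr(\Ascr, x)$, so $\Dscr(Q\Ascr, Qx) = Q\Dscr(\Ascr, x)$.

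Part~(d) is the main obstacle; the plan uses the polar convolution identity~\eqref{eq-polar-convolution-sum} combined with part~(a). Write $v \coloneqq \gauge\Aso(x_1) = \gauge\Ast(x_2)$, and observe the trivial bound $\gauge_{\Ascr_1+\Ascr_2}(x_1+x_2) \leq v$ from the definition of polar convolution as an infimum. Fix $d \in \Dscr(\Ascr_1+\Ascr_2, x_1+x_2)$ and apply part~(a) to produce $\bar\alpha > 0$ with
\[
  \gauge_{\Ascr_1+\Ascr_2}(x_1+x_2+\alpha d) \leq \gauge_{\Ascr_1+\Ascr_2}(x_1+x_2) \leq v
\]
for all $\alpha \in [0,\bar\alpha]$. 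Fix such an $\alpha > 0$. The key step is to show that the infimum defining the polar convolution is attained: since $\Ascr_1$ and $\Ascr_2$ are compact with origins in their interiors, the gauges $\gauge\Aso, \gauge\Ast$ are continuous and coercive, so the set of decompositions $y_1 + y_2 = x_1+x_2+\alpha d$ with $\max\{\gauge\Aso(y_1), \gauge\Ast(y_2)\} \leq v$ is nonempty, closed, and bounded. Picking any such $(y_1,y_2)$ and setting $d_i \coloneqq y_i - x_i$ yields $d_1 + d_2 = \alpha d$ with $\gauge\Asi(x_i + d_i) \leq \gauge\Asi(x_i)$ for $i=1,2$, so $d_i \in \Dscr(\Ascr_i, x_i)$. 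Since $\Dscr(\Ascr_1,x_1)+\Dscr(\Ascr_2,x_2)$ is a cone, $\alpha d$ belonging to it forces $d$ to belong to it as well, completing the proof. The only real subtlety is the attainment argument, which I would handle via a standard coercivity-plus-closedness compactness step using the assumption that the origin lies in the interior of each $\Ascr_i$.
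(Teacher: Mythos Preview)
Your proof is correct and follows essentially the same approach as the paper's: parts (b)--(d) match the paper's arguments almost verbatim, while for part~(a) the paper simply cites McCoy--Tropp~\cite{mccoy2014sharp} rather than giving the convexity-of-sublevel-sets argument you spell out. Your explicit attainment step in part~(d) (via coercivity of the gauges under the interior-origin assumption) actually fills a small gap that the paper leaves implicit when it asserts the existence of the decomposition $d_1+d_2=\alpha d$.
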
 

\begin{proof}
\leavevmode
  \begin{itemize}[left=18pt]
    \item[a)] See \cite[Proposition 2.5]{mccoy2014sharp};
    \item[b)] It follows from the fact that a gauge function is positive homogenous.  
    % \[\Dscr(\tau\Ascr, x) = \cone\set{d \mid \gauge_{\tau\Ascr}(x + d) \leq \gauge_{\tau\Ascr}(x)} = \cone\set{d \mid \gauge_{\Ascr}(x + d) \leq \gauge_{\Ascr}(x)}.\]
    \item[c)] Because $\gauge_{Q\Ascr}=\gauge_\Ascr(Q^*\cdot)$,
    \begin{align*}
      \Dscr(Q\Ascr, Qx) &= \cone\set{d \mid \gauge_{Q\Ascr}(Qx + d) \leq \gauge_{Q\Ascr}(Qx)}
      \\&= \cone\set{d \mid \gauge_{\Ascr}(x + Q^*d) \leq \gauge_{\Ascr}(x)}
      \\&= Q\Dscr(\Ascr, x).
    \end{align*}
    \item[d)] 
    For every $d \in \Dscr(\Ascr_1 + \Ascr_2, x_1 + x_2)$, by \cref{prop-descent-cone-properties-a}, there exists $\alpha > 0$ such that  
    \[\gauge\Asum(x_1 + x_2 + \alpha d) \leq \gauge\Asum(x_1 + x_2).\] 
    Then there exists $d_1, d_2$ such that $d_1 + d_2 = \alpha d$ and 
    \[\max\set{\gauge\Aso(x_1 + d_1), \gauge\Ast(x_2 + d_2)} \leq \gauge\Asum(x_1 + x_2).\]
    By the fact that $\gauge\Asum(x_1 + x_2) \leq \max\set{\gauge\Aso(x_1), \gauge\Ast(x_2)}$ and the assumption $\gauge\Aso(x_1) = \gauge\Ast(x_2)$, it follows that $d_i \in \Dscr(\Ascr_i, x_i)$, which implies $\alpha d = d_1 + d_2 \in \Dscr(\Ascr_1, x_1) + \Dscr(\Ascr_2, x_2)$. Thus $d \in \Dscr(\Ascr_1, x_1) + \Dscr(\Ascr_2, x_2)$.
  \end{itemize}
\end{proof}

The Gaussian width of a set $T \subset \Re^n$ is defined as
\[\omega(T) = \mE_{g} \sup\set{ \ip{g}{y} | y\in T},\]
where the expectation is taken with respect to the standard Gaussian $\Nscr(0,I_n)$. 
The following lemma summarizes the main properties that we use regarding the relationship between the conic summaries $\delta$ and $\omega$.
\begin{lemma}[Properties of conic statistical summaries]\label{prop:stat_dim}
  Let $\Kscr$ be a closed and convex cones in $\Re^n$ and let $Q\in\SO(n)$. Then the following properties hold. 
  \begin{lemmaenum}[left=18pt] 
    \item \label{prop:stat_dim_a} $\delta(Q\Kscr) = \delta(\Kscr)$;
    % \item \label{prop-sta-dim-gaus-wid} $\omega^2(\Kscr) \leq \delta(\Kscr) \leq \omega^2(\Kscr) + 1$;
    % \item \label{prop-sta-dim-gaus-wid2} $(n-1)\omega_s^2(\Kscr) \leq \delta(\Kscr) \leq n\omega_s^2(\Kscr) + 1$.
    \item \label{prop:stat_dim_b} $\delta(\Kscr) = \mE_g \left[\sup\set{ \ip{g}{y} | y\in\Kscr\cap\mB^n}^2\right]$;
    \item \label{prop:stat_dim_c} $\omega(\Kscr\cap\mB^n)^2 \leq \delta(\Kscr)$.
  \end{lemmaenum}
\end{lemma}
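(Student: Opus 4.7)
}

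The plan is to handle the three items sequentially, using only the definition $\delta(\Kscr)=\mE\|\proj_\Kscr(g)\|_2^2$, the rotation invariance of the standard Gaussian, the variational characterization of projection onto a convex cone, and Jensen's inequality.

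For part (a), I would begin by observing that orthogonal transformations commute with metric projection onto the image of a closed convex set, so $\proj_{Q\Kscr}(g)=Q\proj_\Kscr(Q^\ast g)$. Since $Q$ is an isometry, $\|\proj_{Q\Kscr}(g)\|_2=\|\proj_\Kscr(Q^\ast g)\|_2$. Then I would invoke the rotational invariance of the standard Gaussian, $Q^\ast g\sim\Nscr(0,I_n)$, to conclude $\mE\|\proj_{Q\Kscr}(g)\|_2^2=\mE\|\proj_\Kscr(Q^\ast g)\|_2^2=\delta(\Kscr)$.

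For part (b), the key step is to establish the pointwise identity
\begin{equation*}
  \|\proj_\Kscr(g)\|_2=\sup\set{\ip{g}{y} | y\in\Kscr\cap\mB^n}.
\end{equation*}
Write $p\coloneqq\proj_\Kscr(g)$ and use the optimality conditions for projection onto a closed convex cone: $p\in\Kscr$, $\ip{g-p}{p}=0$, and $\ip{g-p}{y}\le 0$ for every $y\in\Kscr$ (the last assertion follows by specialising the variational inequality $\ip{g-p}{y-p}\le 0$ to $y\in\Kscr$, after first deriving $\ip{g-p}{p}=0$ by plugging in $y=0$ and $y=2p$ and using the cone structure). For $y\in\Kscr\cap\mB^n$, these conditions yield $\ip{g}{y}=\ip{g-p}{y}+\ip{p}{y}\le\|p\|_2$ by Cauchy--Schwarz; conversely, if $p\ne0$, the choice $y=p/\|p\|_2\in\Kscr\cap\mB^n$ attains the bound, and if $p=0$ the supremum is $0=\|p\|_2$. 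Squaring and taking expectations then gives the claimed identity for $\delta(\Kscr)$.

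For part (c), I would simply combine part (b) with Jensen's inequality. Using the identity from (b),
\begin{equation*}
  \omega(\Kscr\cap\mB^n)=\mE\sup\set{\ip{g}{y} | y\in\Kscr\cap\mB^n}=\mE\|\proj_\Kscr(g)\|_2,
\end{equation*}
and Jensen applied to the square function yields $(\mE\|\proj_\Kscr(g)\|_2)^2\le\mE\|\proj_\Kscr(g)\|_2^2=\delta(\Kscr)$.

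The main conceptual step is the variational identity in part (b); once it is in hand, the rest is routine. I do not anticipate any real obstacle, since the cone optimality conditions are standard and the rotation argument for (a) is immediate from invariance of the Gaussian measure under orthogonal transformations.
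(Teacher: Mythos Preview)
Your proposal is correct and matches the paper's approach: for part~(c) both you and the paper apply Jensen's inequality (equivalently, $(\mE X)^2\le\mE X^2$) together with the identity in part~(b), while for parts~(a) and~(b) the paper simply cites \cite[Proposition~3.1]{amelunxen2014living} rather than writing out the rotation-invariance and cone-projection arguments you give. Your explicit derivations for (a) and (b) are the standard ones and are exactly what that reference contains, so there is no substantive difference.
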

\begin{proof} 
See \cite[Proposition~3.1(6) and Proposition~3.1(5)]{amelunxen2014living}, respectively, for (a) and (b). 
\begin{itemize}[left=18pt]  
  \item[c)] Indeed, we know that,
  \begin{align*}
    \omega(\Kscr\cap\mB^n)^2 &= \left[\mE_{g} \sup\set{ \ip{g}{y} | y\in \Kscr\cap\mB^n}\right]^2
    \\&\leq \mE_{g} \left[\sup\set{ \ip{g}{y} | y\in \Kscr\cap\mB^n}^2\right]
    \\&= \delta(\Kscr),
  \end{align*}
  where the first equality follows from the definition of gaussian with, the first inequality follows from the fact that $\mE(X)^2 \leq \mE(X)^2$ for any random variable $X$, and the last equality follows from \cref{prop:stat_dim_b}. 
\end{itemize}
\end{proof}

Our next lemma shows that if the angle between two cones is bounded, then the norms of individual vectors are bounded by the norm of their sum. 
\begin{lemma}\label{lemma:bound_norm}
  Let $\Kscr_1$ and $\Kscr_2$ be two closed convex cones in $\Re^n$. If $\cos\angle(-\Kscr_1, \Kscr_2) \leq 1 - \beta$ for some $\beta \in (0, 1]$, then for any $u \in \Kscr_1$ and $v \in \Kscr_2$, 
  \[\max\set{\|u\|, \|v\|} \leq \frac{1}{\sqrt{\beta}}\|u + v\|.\]
\end{lemma}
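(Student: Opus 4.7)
The plan is to translate the angular condition into an inner-product inequality between $u$ and $v$, expand $\|u+v\|^2$, and then reduce the claim to a one-parameter quadratic inequality that I can verify by computing its discriminant.

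First I would dispose of the degenerate cases: if either $u=0$ or $v=0$, the inequality $\max\{\|u\|,\|v\|\}\le \|u+v\|/\sqrt{\beta}$ holds trivially because $\beta\le 1$. Assume henceforth $u,v\neq 0$. Since $-u\in -\Kscr_1$ and $v\in\Kscr_2$, normalizing and applying the definition of $\cos\angle(-\Kscr_1,\Kscr_2)$ recalled just before \cref{def:incoherence} gives
\[
 \frac{\ip{-u}{v}}{\|u\|\,\|v\|}\;\le\;1-\beta,
\]
which rearranges to the key bound $\ip{u}{v}\ge (\beta-1)\|u\|\,\|v\|$. Expanding the square then yields
\[
 \|u+v\|^2 \;=\; \|u\|^2 + 2\ip{u}{v} + \|v\|^2 \;\ge\; \|u\|^2 + \|v\|^2 + 2(\beta-1)\|u\|\,\|v\|.
\]

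Without loss of generality assume $\|u\|\ge\|v\|$ (the other case is symmetric), set $t=\|v\|/\|u\|\in[0,1]$, and divide the displayed inequality by $\|u\|^2$. It suffices to establish
\[
 1 + t^2 + 2(\beta-1)t \;\ge\; \beta,
\]
i.e.\ that the quadratic $f(t)=t^2+2(\beta-1)t+(1-\beta)$ is nonnegative on $\mathbb R$. Its discriminant is $4(\beta-1)^2-4(1-\beta) = -4\beta(1-\beta)\le 0$ for $\beta\in(0,1]$, and the leading coefficient is positive, so $f\ge 0$ everywhere. This gives $\|u+v\|^2\ge \beta\|u\|^2 = \beta\max\{\|u\|^2,\|v\|^2\}$, which is the claim.

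I do not expect any real obstacle: the only step that requires a moment of thought is noticing that the expansion naturally produces a quadratic in the ratio $t=\|v\|/\|u\|$ whose nonnegativity is controlled by a small discriminant computation. The argument uses only the definition of the angle between cones, so the conic structure of $\Kscr_1$ and $\Kscr_2$ plays no further role beyond justifying that $-u/\|u\|$ and $v/\|v\|$ are admissible unit vectors in $-\Kscr_1\cap\mS^{n-1}$ and $\Kscr_2\cap\mS^{n-1}$ respectively.
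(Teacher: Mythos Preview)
Your proof is correct and follows essentially the same approach as the paper: both expand $\|u+v\|^2$, apply the angle hypothesis to bound $\ip{-u}{v}\le(1-\beta)\|u\|\|v\|$, and then verify the resulting algebraic inequality. The only difference is in that last step: where you normalize and check a discriminant, the paper uses the identity $\|u\|^2+\|v\|^2-2(1-\beta)\|u\|\|v\|=\beta(\|u\|^2+\|v\|^2)+(1-\beta)(\|u\|-\|v\|)^2$, which makes the lower bound $\beta\max\{\|u\|^2,\|v\|^2\}$ immediate and in fact yields the slightly stronger $\beta(\|u\|^2+\|v\|^2)$.
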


\begin{proof}
By expanding the norm square of $u + v$ we can get that 
  \begin{align*}
    \|u + v\|^2 &= \|u\|^2 + \|v\|^2 - 2\ip{-u}{v}
    \\&= \|u\|^2 + \|v\|^2 - 2\cos(\angle(-u, v))\|u\|\|v\|
    \\&\geq \|u\|^2 + \|v\|^2 - 2(1-\beta)\|u\|\|v\|
    \\&= \beta(\|u\|^2 + \|v\|^2) + (1-\beta)(\|u\| - \|v\|)^2
    \\&\geq \beta\max\set{\|u\|^2, \|v\|^2},
  \end{align*}
where the first inequality follows from the definition of the cosine of the angle between two cones. 
\end{proof}

Our next lemma is a technical lemma for the expectation. 
\begin{lemma}\label{lemma:expectation}
  Let $X$ and $Y$ be nonnegative random variables, then we have 
  \[\mE[(X+Y)^2] \leq \left( \sqrt{\mE[X^2]} + \sqrt{\mE[Y^2]} \right)^2.\]
\end{lemma}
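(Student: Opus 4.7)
The plan is to recognize this as the $L^2$ version of Minkowski's triangle inequality, which follows directly from the Cauchy--Schwarz inequality applied to $\mE[XY]$. Since both sides are nonnegative, it is cleanest to expand the square and reduce the claim to a cross-term bound.

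First I would write out
\[
  \mE[(X+Y)^2] = \mE[X^2] + 2\mE[XY] + \mE[Y^2],
\]
and expand the desired right-hand side as
\[
  \left(\sqrt{\mE[X^2]} + \sqrt{\mE[Y^2]}\right)^2 = \mE[X^2] + 2\sqrt{\mE[X^2]\,\mE[Y^2]} + \mE[Y^2].
\]
After cancelling the common terms $\mE[X^2]$ and $\mE[Y^2]$ on both sides, the inequality reduces to showing
\[
  \mE[XY] \leq \sqrt{\mE[X^2]\,\mE[Y^2]},
\]
which is precisely the Cauchy--Schwarz inequality in $L^2$. Since $X$ and $Y$ are nonnegative, $\mE[XY]$ is well-defined and nonnegative, so the square-rooting step causes no sign issue.

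There is no real obstacle here; the only subtlety would be if either $\mE[X^2]$ or $\mE[Y^2]$ were infinite, in which case the inequality holds trivially (with the right-hand side being $+\infty$), so we may assume both are finite and apply Cauchy--Schwarz directly. The conclusion then follows by substitution. This is a short, standard argument, and the lemma will be invoked later in conjunction with \cref{prop:stat_dim} and \cref{lemma:bound_norm} to bound $\sqrt{\delta(\Kscr_1+\Kscr_2)}$ by the sum $\sqrt{\delta(\Kscr_1)}+\sqrt{\delta(\Kscr_2)}$ in the proof of \cref{prop:bound_sta_dim}.
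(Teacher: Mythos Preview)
Your proof is correct and follows essentially the same approach as the paper: both arguments expand the squares and reduce the claim to the Cauchy--Schwarz inequality $\mE[XY]\le\sqrt{\mE[X^2]\,\mE[Y^2]}$. The paper expands the right-hand side first whereas you expand both sides, but the logic is identical.
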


\begin{proof}
  By expanding the right hand side, we can get
  \begin{align*}
    \left( \sqrt{\mE[X^2]} + \sqrt{\mE[Y^2]} \right)^2 &= \mE[X^2] + \mE[Y^2] + 2\sqrt{\mE[X^2]\mE[Y^2]}
    \\&\geq \mE[X^2] + \mE[Y^2] + 2\mE[XY]
    \\&= \mE[(X+Y)^2],
  \end{align*}
  where the inequality follows from the Cauchy–Schwarz inequality.
\end{proof}

\subsection{Proof for \titlelink{\cref{thm-stability}}} \label{sec:proof-thm-stability}
For each $i\in\irange1k$, let $\epsilon_i \coloneqq  x_i^* - x\nai$ and $\epsilon_{-i} : = \sum_{j \neq i}\epsilon_j$. By the definition of descent cone,
$\epsilon_i \in \Dscr(\Ascr_i, x\nai$).
Because $\{(x\nai, \Ascr_i)\}_{i=1}^k$ are $\beta$-incoherent for some $\beta\in(0,1]$, by~\cref{def:incoherence},
\[\cos\angle\left(-\epsilon_i, \epsilon_{-i}\right) \leq 1 - \beta.\]
By \cref{lemma:bound_norm}, it follows that
\[ \left\|\epsilon_i + \epsilon_{-i}\right\| \geq \sqrt{\beta}\|\epsilon_i\|.\]
The desired result follows.

\subsection{Proof for \titlelink{\cref{prop:bound_sta_dim}}} \label{sec:proof-bound_sta_dim}
In this proof, we define $\Kscr_{i, \beta} =  \Kscr_i \cap \tfrac{1}{\sqrt{\beta}} \mB^n$ and $f_{i, \beta}(g) = \sup\set{ \ip{g}{u} | u \in \Kscr_{i, \beta}}$ for $i = 1, 2$. By \cref{prop:stat_dim_b}, we know that the statistical dimension can be expressed as
\begin{align*}
  &\delta(\Kscr_1 + \Kscr_2) = \mE_g \left[\sup\set{ \ip{g}{y} | y\in(\Kscr_1 + \Kscr_2)\cap\mB^n}^2 \right]
  \\&= \mE_g \left[\sup\set{ \ip{g}{u + v} | u \in \Kscr_1,\ v \in \Kscr_2, \|u + v\| \leq 1 }^2\right]
  \\&\leq \mE_g \left[\sup\set{ \ip{g}{u + v} | u \in \Kscr_{1, \beta},\ v \in \Kscr_{2, \beta} }^2\right]
  \\&= \mE_g \left[ \left(\sup_{u \in \Kscr_{1, \beta}}\ \ip{g}{u} + \sup_{v \in \Kscr_{2, \beta}}\ \ip{g}{v} \right)^2 \right]
  \\&\leq \left( \sqrt{ \mE_g \left[ f_{1, \beta}(g)^2 \right]} + \sqrt{ \mE_g \left[ f_{2, \beta}(g)^2 \right]} \right)^2
  % \\&= \left( \tfrac{1}{\sqrt{\beta}} \sqrt{ \mE_g \left[ \sup\set{ \ip{g}{u} | u \in \Kscr_1 \cap \mB^n}^2 \right]} + \tfrac{1}{\sqrt{\beta}} \sqrt{ \mE_g \left[ \sup\set{ \ip{g}{v} | v \in \Kscr_2 \mB^n}^2 \right]} \right)^2
  \\&= \tfrac{1}{\beta}\left(\sqrt{\delta(\Kscr_1)} + \sqrt{\delta(\Kscr_2)} \right)^2,
\end{align*}
where the first inequality follows from \cref{lemma:bound_norm} and the fact that the supremum is always nonnegative, and the second inequality follows from \cref{lemma:expectation}. 

\subsection{Proof for \titlelink{\cref{coro:bound_sta_dim}}} \label{sec:proof-coro-bound_sta_dim}
Throught this proof, for all $i\in1:k$, we define $\Dscr_i = \Dscr(\Ascr_i, x\nai)$, $\delta_i = \delta(\Dscr_i)$ and $\delta_{1:i} = \delta(\sum_{j=1}^i\Dscr_i)$. By \cref{assume-blanket} and \cref{prop-descent-cone-properties-d}, we know that $\Dscr\subs \subseteq \sum_{i=1}^k \Dscr_i$, and it follows that $\delta(\Dscr\subs)\leq\delta_{1:k}$. So we only need to give an upper bound for $\delta_{1:k}$. Since $\cos\angle\left({-\Dscr_k},\, \sum_{i=1}^{k-1}\Dscr_i\right)\leq 1 - \beta$, by \cref{prop:bound_sta_dim}, it follows that 
\begin{equation} \label{eq:11}
  \sqrt{\delta_{1:k}} \leq \beta^{-\tfrac{1}{2}}\left(\sqrt{\delta_{1:(k-1)}} + \sqrt{\delta_k}\right).
\end{equation}
Since $\sum_{i=1}^{k-2}\Dscr_i \subseteq \sum_{j\neq(k-1)}\Dscr_j$, it follows that $\cos\angle\left({-\Dscr_{k-1}},\, \sum_{i=1}^{k-2}\Dscr_i\right)\leq  1 - \beta$. By \cref{prop:bound_sta_dim}, we have 
\begin{equation} \label{eq:12}
  \sqrt{\delta_{1:(k-1)}} \leq \beta^{-\tfrac{1}{2}}\left(\sqrt{\delta_{1:(k-2)}} + \sqrt{\delta_{k-1}}\right).
\end{equation}
Combining \eqref{eq:11} and \eqref{eq:12}, we know that 
\[\sqrt{\delta_{1:k}} \leq \beta^{-\tfrac{2}{2}}\left(\sqrt{\delta_{1:(k-2)}} + \sqrt{\delta_{k-1}} + \sqrt{\delta_k}\right).\]
Repeating this process, we can conclude that 
\[\sqrt{\delta_{1:k}} \leq \beta^{-\tfrac{k-1}{2}}\sum_{i=1}^k\sqrt{\delta_i}.\]

% \begin{align*}
%   \sqrt{\delta_{1:i}} &\leq \tfrac{1}{\sqrt{\beta_i}}\left( \sqrt{\delta_{1:(i-1)}} + \sqrt{\delta_i}\right)
%   \\&\leq \tfrac{1}{\sqrt{\beta_i(t)}}\left( \tfrac{1}{\sqrt{\beta_{i-1}(t)}}\left( \sqrt{\delta_{1:(i-2)}} + \sqrt{\delta_{i-1}}\right) + \sqrt{\delta_i}\right)
%   \\&\vdots
%   \\&\leq \tfrac{1}{\sqrt{\beta_i(t)}}\left( \tfrac{1}{\sqrt{\beta_{k-1}(t)}} \left( \dots  \tfrac{1}{\sqrt{\beta_2(t)}} \left(\sqrt{\delta_1} + \sqrt{\delta_2}\right) + \sqrt{\delta_3}\right) + \dots + \sqrt{\delta_i}\right)
%   \\&\leq \tfrac{1}{\sqrt{\beta_2(t)\dots\beta_i(t)}}\sum_{j=1}^i\sqrt{\delta_j},
% \end{align*}

\subsection{Proof for \titlelink{\cref{prop-angle-cones}}}\label{sec:proof-prop-angle-cones}
Throught this proof, we define the following notations:
\begin{itemize} 
  \item $\overline\Kscr_i:=\Kscr_i\cap\mS^{n-1}$ for $i=1,2$;
  \item $\widehat\Kscr_i:=\Kscr_i\cap\mB^n$ for $i=1,2$;
  \item $f(W\in\Re^{n\times n}) = \sup\set{\ip{x}{Wy} \mid x \in \overline\Kscr_1, y \in\overline\Kscr_2}$;
  \item $\hat f(W\in\Re^{n\times n}) = \sup\set{\ip{x}{Wy} \mid x \in \widehat\Kscr_1, y \in\widehat\Kscr_2}$;
  \item $\Oscr_n = \set{Q\in\Re^{n\times n}: Q^TQ = I_n}$;
  \item $\Sscr\Oscr_{n, +} = \set{Q\in\Oscr_n: \det(Q) = 1}$;
  \item $\Sscr\Oscr_{n, -} = \set{Q\in\Oscr_n: \det(Q) = -1}$.
\end{itemize} 

Our proof consists of three steps. 

\textbf{First step: show that both $f$ and $\hat f$ are convex and $1$-Lipschitz functions.} First, we show that both $f$ and $\hat f$ are convex. 
For any $W_1, W_2 \in \Re^{n\times n}$ and any $t \in [0,1]$, 
\begin{align*}
  &f(tW_1 + (1-t)W_2) 
  \\&= \sup\set{\ip{x}{(tW_1 + (1-t)W_2)y} \mid x \in \overline\Kscr_1, y \in\overline\Kscr_2}
  \\&= \sup\set{\ip{x}{tW_1y} + \ip{x}{(1-t)W_2y} \mid x \in \overline\Kscr_1, y \in\overline\Kscr_2}
  \\&\leq tf(W_1) + (1-t)f(W_2).
\end{align*}
So $f$ is convex. The same reason holds for $\hat f$, and thus $\hat f$ is also convex. 
Next, by \cite[Lemma~2.6]{shalev2011online}, in order to show that both $f$ and $\hat f$ are $1$-Lipschitz, we only need to show that the norm of any subgradient of $f$ or $\hat f$ is bounded by $1$. By \cite[Theorem~D.4.4.2]{hiriart-urruty01}, we know that for any $W \in \Re^{n\times n}$, 
\begin{align*}
  \partial f(W) &= \conv\set{xy^T \mid x \in \overline\Kscr_1, y \in\overline\Kscr_2, \ip{x}{Wy} = f(W)},
  \\ \partial \hat f(W) &= \conv\set{xy^T \mid x \in \widehat\Kscr_1, y \in\widehat\Kscr_2, \ip{x}{Wy} = f(W)}.
\end{align*}
Since $\|x\| \leq 1$ and $\|y\| \leq 1$, it is easy to verify that for any $W\in\Re^{n\times n}$ and for any $Z \in \partial f(W) \cup \partial \hat f(W)$, 
\[\|Z\|_F \leq 1,\]
where $\|\cdot\|_F$ is the Frobenius norm. Therefore, we can conclude that both $f$ and $\hat f$ are $1$-Lipschitz functions.

\textbf{Second step: bound $\mE_{Q \sim \Uscr(\Sscr\Oscr_{n, +})}\left[f(Q)\right]$.} 
First, we give the bound on $\mE_{Q \sim \Uscr(\Oscr_n)}\left[\hat f(Q)\right]$. From the first step, we know that $\hat f$ is convex. Then by the comparison principle developed by Tropp; see \cite[Theorem~5 and Lemma~8]{tropp2012comparison}, we can conclude that 
\begin{equation} \label{eq:help1}
  \mE_{Q \sim \Uscr(\Oscr_n)}\left[\hat f(Q)\right] \leq \tfrac{1.5}{\sqrt{n}}\mE_{G\sim\Nscr(0, I_n)}\left[\hat f(G)\right],
\end{equation}
Next, we give the bound on $\mE_{Q \sim \Uscr(\Sscr\Oscr_{n, +})}\left[\hat f(Q)\right]$. By expanding the uniform distribution over $\Oscr_n$, we can get
\begin{align*}
  &\mE_{Q \sim \Uscr(\Oscr_n)}\left[\hat f(Q)\right] 
  \\&= \tfrac{1}{2}\mE_{Q \sim \Uscr(\Sscr\Oscr_{n, +})}\left[\hat f(Q)\right] + \tfrac{1}{2}\mE_{Q \sim \Uscr(\Sscr\Oscr_{n, -})}\left[\hat f(Q)\right]
  \\&\geq \tfrac{1}{2}\mE_{Q \sim \Uscr(\Sscr\Oscr_{n, +})}\left[\hat f(Q)\right],
\end{align*}
where the inequality follows from the fact that $\hat f$ is non-negative. Combine this result with \eqref{eq:help1}, we can conclude that 
\begin{equation} \label{eq:help2}
  \mE_{Q \sim \Uscr(\Sscr\Oscr_{n, +})}\left[\hat f(Q)\right] \leq \tfrac{3}{\sqrt{n}}\mE_{G\sim\Nscr(0, I_n)}\left[\hat f(G)\right].
\end{equation}
Then, by the Gaussian Chevet’s inequality; see\cite[Exercise~8.7.4]{vershynin2018high}, we know that 
\begin{equation} \label{eq:help3}
\begin{split}
  \mE_{G\sim\Nscr(0, I_n)}\left[\hat f(G)\right] &\leq \omega(\widehat\Kscr_1) + \omega(\widehat\Kscr_2)\\
  &\leq \sqrt{\delta(\Kscr_1)} + \sqrt{\delta(\Kscr_2)},
\end{split}
\end{equation}
where the second inequality follows from \cref{prop:stat_dim_c}. Combine \eqref{eq:help2} and \eqref{eq:help3}, we can get
\begin{equation} \label{eq:help4}
  \mE_{Q \sim \Uscr(\Sscr\Oscr_{n, +})}\left[\hat f(Q)\right] \leq \tfrac{3}{\sqrt{n}}\left( \sqrt{\delta(\Kscr_1)} + \sqrt{\delta(\Kscr_2)} \right).
\end{equation}
Finally, by the fact that $f \leq \hat f$ and \eqref{eq:help4}, we can conclude that 
\begin{equation} \label{eq:help5}
  \mE_{Q \sim \Uscr(\Sscr\Oscr_{n, +})}\left[f(Q)\right] \leq \tfrac{3}{\sqrt{n}}\left( \sqrt{\delta(\Kscr_1)} + \sqrt{\delta(\Kscr_2)} \right).
\end{equation}

\textbf{Third step: concentration bound for $f(Q)$.}
From step 1, we know that $f$ is $1$-Lipschitz. For clearness, we denote $\mP_{Q \sim \Uscr(\Sscr\Oscr_{n, +})}$ and $\mE_{Q \sim \Uscr(\Sscr\Oscr_{n, +})}$ as $\mP_Q$ and $\mE_Q$. By the concentration bounds of Lipschitz functions over the special orthogonal group develop by Meckes; see \cite[Theorem~5.5 and Theorem~5.16]{meckes2019random}, we can get that for every $t\geq0$,
\[\mP_Q\left[f(Q) \geq \mE_Q[f(Q)] + t\right] \leq \exp(-\tfrac{n-2}{8}t^2).\]
Note that a similar result can be obtained from \cite[Theorem~5.2.7]{vershynin2018high}.
Combining with \eqref{eq:help5}, we can conclude that for every $t\geq0$,
\[\mP_Q\left[f(Q) \geq  \tfrac{3}{\sqrt{n}}\left( \sqrt{\delta(\Kscr_1)} + \sqrt{\delta(\Kscr_2)} \right) + t\right] \leq \exp(-\tfrac{n-2}{8}t^2).\]

\subsection{Lemmas needed for the proof of \titlelink{\cref{thm:Incoherence}}} \label{sec:prob_bounds}
In this section, we present two lemmas that are needed for the proof of \cref{thm:Incoherence}. These two lemmas provide probabilistic bound on the statistical dimension of sum of randomly rotated cones. 

The next lemma provides a probabilistic bound on the statistical dimension of the sum of two cones. 
\begin{lemma}[Probabilistic bound on statistical dimension under random rotation] \label{prop-bound-sta-dim}
   Let $\Kscr_1$ and $\Kscr_2$ be two closed convex cones in $\Re^n$. Then 
 \begin{align*}
   \mP\bigg[\sqrt{\delta(\Kscr_1 + Q\Kscr_2)} &\leq \tfrac{1}{\sqrt{\beta(t)}}\left(\sqrt{\delta(\Kscr_1)} + \sqrt{\delta(\Kscr_2)} \right) \bigg]
   \\&\geq 1 - \exp(-\tfrac{n-2}{8}t^2)
   \\\text{with} \beta(t) &= 1 - \tfrac{3}{\sqrt{n}}\left(\sqrt{\delta(\Kscr_1)} + \sqrt{\delta(\Kscr_2)}\right) - t ,
 \end{align*}
   where $Q$ is drawn uniformly at random from $\SO(n)$.
\end{lemma}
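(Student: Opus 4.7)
The plan is to combine \cref{prop:bound_sta_dim} (the deterministic bound on the statistical dimension of a cone sum in terms of the angle between the cones) with \cref{prop-angle-cones} (the probabilistic bound on the cosine of the angle between a randomly rotated cone and a fixed cone). The two results dovetail neatly: the first needs $\cos\angle(-\Kscr_1,\Kscr_2)\le 1-\beta$ as input, while the second produces exactly such a high-probability upper bound on this cosine.

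First I would apply \cref{prop-angle-cones} not to $\Kscr_1$ and $\Kscr_2$ directly but to the pair $(-\Kscr_1,\Kscr_2)$. Since the standard Gaussian is symmetric, the projection onto $-\Kscr_1$ satisfies $\proj_{-\Kscr_1}(g)=-\proj_{\Kscr_1}(-g)$ and hence $\delta(-\Kscr_1)=\delta(\Kscr_1)$, so the bound from \cref{prop-angle-cones} reads
\[
  \mP\!\left[\cos\angle(-\Kscr_1, Q\Kscr_2)\ge \tfrac{3}{\sqrt n}\bigl(\sqrt{\delta(\Kscr_1)}+\sqrt{\delta(\Kscr_2)}\bigr)+t\right]\le \exp\!\bigl(-\tfrac{n-2}{8}t^2\bigr).
\]
Equivalently, with probability at least $1-\exp(-\tfrac{n-2}{8}t^2)$ we have $\cos\angle(-\Kscr_1, Q\Kscr_2)\le 1-\beta(t)$ with $\beta(t)=1-\tfrac{3}{\sqrt n}(\sqrt{\delta(\Kscr_1)}+\sqrt{\delta(\Kscr_2)})-t$.

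On this high-probability event, I would invoke \cref{prop:bound_sta_dim} applied to the pair $(\Kscr_1, Q\Kscr_2)$ to obtain
\[
  \sqrt{\delta(\Kscr_1+Q\Kscr_2)}\le \tfrac{1}{\sqrt{\beta(t)}}\bigl(\sqrt{\delta(\Kscr_1)}+\sqrt{\delta(Q\Kscr_2)}\bigr),
\]
and then use rotational invariance of the statistical dimension (\cref{prop:stat_dim_a}) to replace $\delta(Q\Kscr_2)$ by $\delta(\Kscr_2)$. Combining the two steps yields the claimed bound.

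The only minor subtlety—not really an obstacle—is that \cref{prop:bound_sta_dim} requires $\beta(t)\in(0,1]$, so the statement is only informative when the event parameter $t$ is chosen small enough that $\beta(t)>0$; otherwise the right-hand side of the conclusion is vacuous ($+\infty$) and the inequality holds trivially. Since the final statement treats $\beta(t)$ as a formal quantity, no explicit case split is needed in the proof.
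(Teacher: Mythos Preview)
Your proposal is correct and follows exactly the approach the paper takes: the paper's proof is the one-line ``By \cref{prop:bound_sta_dim} and \cref{prop-angle-cones},'' and you have spelled out precisely how the two results are combined, including the small observations that $\delta(-\Kscr_1)=\delta(\Kscr_1)$ and $\delta(Q\Kscr_2)=\delta(\Kscr_2)$.
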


\begin{proof}
  By \cref{prop:bound_sta_dim} and \cref{prop-angle-cones}.
\end{proof}

Our next lemma extends \cref{prop-bound-sta-dim} to arbitrary number of cones. 
\begin{lemma} \label{prop:bound-sum-sta-dim}
   Let $\Kscr_1, \dots, \Kscr_p$ be closed convex cones in $\Re^n$ and let $Q_1, \dots, Q_p$ be i.i.d. matrices uniformly drawn from $\SO(n)$. If $\sum_{i=1}^p \sqrt{\delta(\Kscr_i)} \leq \left(1 - 4^{- \scaleto{\tfrac{1}{p-1}}{10pt} } - t\right)\sqrt{n} / 6$ for some $t > 0$, then 
   \begin{equation*}
   \mP\left[ \sqrt{\delta\left(\overline{\Kscr}\right)} \leq 2\sum_{i=1}^p \sqrt{\delta(\Kscr_i)}\right] \geq 1 - (p-1)\exp(-\tfrac{n-2}{8}t^2),
   \end{equation*}
   where $\overline{\Kscr} = \sum_{i=1}^pQ_i\Kscr_i$.
\end{lemma}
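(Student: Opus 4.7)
The plan is induction on $p$, using the two-cone estimate \cref{prop-bound-sta-dim} to attach one rotated cone at a time. Set $\Kscr^{(j)} := \sum_{i=1}^{j} Q_i \Kscr_i$ and $T_j := \sum_{i=1}^{j} \sqrt{\delta(\Kscr_i)}$, and let $\kappa := 4^{-1/(p-1)}$, chosen so that $\kappa^{-(p-1)/2} = 2$. Because $\Kscr^{(j)} = \Kscr^{(j-1)} + Q_j \Kscr_j$ and $Q_j$ is independent of $(Q_1,\ldots,Q_{j-1})$, I can condition on the earlier rotations to treat $\Kscr^{(j-1)}$ as a fixed cone and apply \cref{prop-bound-sta-dim} to the pair $(\Kscr^{(j-1)}, Q_j\Kscr_j)$; the rotation-invariance property \cref{prop:stat_dim_a} lets me replace $\delta(Q_j\Kscr_j)$ by $\delta(\Kscr_j)$.

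The inductive hypothesis I carry is the event $E_j := \{\sqrt{\delta(\Kscr^{(j)})} \leq \kappa^{-(j-1)/2} T_j\}$, which holds automatically at $j=1$. For the step $j-1 \to j$, the hypothesis at $j-1$ gives the deterministic consequence $\sqrt{\delta(\Kscr^{(j-1)})} \leq 2 T_{j-1}$ (since $\kappa^{-(j-2)/2} \leq \kappa^{-(p-1)/2} = 2$ for $j\le p$), whence the angle parameter in \cref{prop-bound-sta-dim} satisfies
\[
\beta_j(t) \;\geq\; 1 - (6/\sqrt{n})\,T_p - t \;\geq\; \kappa,
\]
where the last inequality uses the standing hypothesis $T_p \leq (1-\kappa-t)\sqrt{n}/6$. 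Consequently $1/\sqrt{\beta_j(t)} \leq \kappa^{-1/2}$, and the two-cone bound upgrades to
\[
\sqrt{\delta(\Kscr^{(j)})} \;\leq\; \kappa^{-1/2}\bigl(\sqrt{\delta(\Kscr^{(j-1)})} + \sqrt{\delta(\Kscr_j)}\bigr) \;\leq\; \kappa^{-(j-1)/2}\, T_j,
\]
using $\kappa^{-1/2} \leq \kappa^{-(j-1)/2}$ for $j\geq 2$. Thus $\mP(E_j \mid Q_1,\ldots,Q_{j-1}) \geq 1 - \exp(-(n-2)t^2/8)$ whenever $E_{j-1}$ holds, and a union bound over the $p-1$ attachments yields $\mP(\bigcap_{j=1}^{p} E_j) \geq 1 - (p-1)\exp(-(n-2)t^2/8)$. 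At $j=p$, the exponent collapses to $\kappa^{-(p-1)/2}=2$, and $E_p$ is precisely the claim.

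The main obstacle is calibrating the induction constant: $\kappa$ must be chosen simultaneously so that (i) accumulating $p-1$ factors of $\kappa^{-1/2}$ produces exactly the clean factor $2$, and (ii) the deterministic consequence $\sqrt{\delta(\Kscr^{(j-1)})} \leq 2T_{j-1}$ that propagates from the hypothesis keeps $\beta_j(t)$ bounded below by the same $\kappa$ at every step. The second requirement is why the factor $3$ appearing in the two-cone bound of \cref{prop-bound-sta-dim} is replaced by a factor $6$ in the denominator of the standing hypothesis on $T_p$: the partial-sum bound doubles the contribution of $\sqrt{\delta(\Kscr^{(j-1)})}$ compared with the base case. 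Once $\kappa = 4^{-1/(p-1)}$ is pinned down this way, everything else is a clean induction combined with a union bound.
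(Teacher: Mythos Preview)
Your argument is correct and follows essentially the same approach as the paper: both proofs attach one rotated cone at a time via the two-cone estimate (\cref{prop-bound-sta-dim}), show by induction that each $\beta_j(t)\ge 4^{-1/(p-1)}$ under the standing hypothesis on $\sum_i\sqrt{\delta(\Kscr_i)}$, and then apply a union bound over the $p-1$ attachment steps. Your packaging of the induction via the events $E_j=\{\sqrt{\delta(\Kscr^{(j)})}\le \kappa^{-(j-1)/2}T_j\}$ is just a cleaner reformulation of the paper's three separate steps.
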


% \subsection{Proof for \titlelink{\cref{prop:bound-sum-sta-dim}}}\label{sec:proof-prop:bound-sum-sta-dim}
\begin{proof}
Throughout this proof, we define the following notations:
\begin{itemize}
  \item $\delta_i = \delta(\Kscr_i)$, for all $i\in1:p$;
  \item $\delta_{1:i} = \delta\left(\sum_{j=1}^iQ_j\Kscr_j\right)$, for all $i\in1:p$;
  \item For each $i\in2:p$, define the event 
\begin{align*}
  E_i(t) &= \left\{ \sqrt{\delta_{1:i}} \leq \tfrac{1}{\sqrt{\beta_i(t)}} \left( \sqrt{\delta_{1:(i-1)}} + \sqrt{\delta_i}\right)\right\}
  \\\text{with} \beta_i(t) &= 1 - \tfrac{3}{\sqrt{n}}\left( \sqrt{\delta_{1:(i-1)}} + \sqrt{\delta_i}\right) - t.
\end{align*}
\end{itemize}

\textbf{Our proof consists of three steps.}

\textbf{Step 1: bound the probability of $E_2(t)\land\cdots\land E_p(t)$.}
Denote the indicator random variable for $E_i(t)$ by $\mathbbm{1}_{E_i(t)}$, which evaluates to $1$ if $E_i(t)$ occurs and otherwise evaluates to $0$. Then for each $i\in2:p$, we have
\begin{align*}
  \mP(E_i(t)) &= \mE(\mathbbm{1}_{E_i(t)})
  \\&= \mE_{\{Q_j\}_{j=1}^{i-1}} \left[ \mE\left(\mathbbm{1}_{E_i(t)} \mid \{Q_j\}_{j=1}^{i-1}\right)\right]
  \\&\geq \mE_{\{Q_j\}_{j=1}^{i-1}} \left[ 1 - \exp(-\tfrac{n-2}{8}t^2) \right]
  \\&= 1 - \exp(-\tfrac{n-2}{8}t^2),
\end{align*}
where the inequality follows from \cref{prop-bound-sta-dim} and the assumption that $Q_i$ are all independent. Extending the bound on $\mP(E_i(t))$ to all $i\in\irange2p$ via the union bound, we have
\[
  \mP(E_2(t) \land\cdots\land E_p(t))
  \geq 1 - (p-1)\exp(-\tfrac{n-2}{8}t^2).\]

\textbf{Step 2: show that $E_2(t) \land\cdots\land E_p(t)$ implies bound on $\sqrt{\delta_{1:p}}\leq\tfrac{1}{\sqrt{\beta_2(t)\dots\beta_p(t)}}\sum_{i=1}^p\sqrt{\delta_i}$}. Indeed, we have 
\begin{align*}
  \sqrt{\delta_{1:p}} &\leq \tfrac{1}{\sqrt{\beta_p(t)}}\left( \sqrt{\delta_{1:(p-1)}} + \sqrt{\delta_p}\right)
  \\&\leq \tfrac{1}{\sqrt{\beta_p(t)}}\left( \tfrac{1}{\sqrt{\beta_{p-1}(t)}}\left( \sqrt{\delta_{1:(p-2)}} + \sqrt{\delta_{p-1}}\right) + \sqrt{\delta_p}\right)
  \\&\leq \tfrac{1}{\sqrt{\beta_p(t)\beta_{p-1}(t)}}\left( \sqrt{\delta_{1:(p-2)}} + \sqrt{\delta_{p-1}} + \sqrt{\delta_p}\right)
  \\&\vdots
  \\&\leq \tfrac{1}{\sqrt{\beta_2(t)\dots\beta_i(t)}}\sum_{j=1}^i\sqrt{\delta_j}.
\end{align*}

\textbf{Step 3: show that $E_2(t) \land\cdots\land E_p(t)$ and the assumption $\sum_{i=1}^p \sqrt{\delta(\Kscr_i)} \leq \left(1 - 4^{- \scaleto{\tfrac{1}{p-1}}{10pt} } - t\right)\sqrt{n} / 6$ implies that $\beta_i(t) \geq 4^{-\scaleto{\tfrac{1}{k-1}}{10pt} }$ for $i\in2:p$.} We prove this by induction on $i$. First we show that $\beta_2(t) \geq 4^{-\scaleto{\tfrac{1}{k-1}}{10pt} }$. Indeed, we have 
\begin{align*}
  \beta_2(t) &= 1 - \tfrac{3}{\sqrt{n}}\left( \sqrt{\delta_1} + \sqrt{\delta_2}\right) - t
  \\&\geq 1 - \tfrac{3}{\sqrt{n}}\tfrac{ \left(1 - 4^{- \scaleto{\tfrac{1}{k-1}}{10pt} } - t\right)\sqrt{n} }{6} - t
  \geq  4^{-\scaleto{\tfrac{1}{k-1}}{10pt} }.
\end{align*}
 Next for any $i\in3:k$, we assume that $\beta_j(t) \geq 4^{-\scaleto{\tfrac{1}{k-1}}{10pt} }$ for all $2\leq j \leq (i-1)$, then we have 
\begin{align*}
  \beta_i(t) &=  1 - \tfrac{3}{\sqrt{n}}\left( \sqrt{\delta_{1:(i-1)}} + \sqrt{\delta_i}\right) - t
  \\&\geq 1 - \tfrac{3}{\sqrt{n}} \tfrac{1}{\sqrt{\beta_2(t)\dots\beta_{i-1}(t)}}\sum_{j=1}^{i}\sqrt{\delta_j} - t
  \\&\geq 1 - \tfrac{3}{\sqrt{n}} 2^{ \scaleto{\tfrac{i-2}{k-1}}{10pt} } \tfrac{ \left(1 - 4^{- \scaleto{\tfrac{1}{k-1}}{10pt} } - t\right)\sqrt{n} }{6} - t
  \geq 4^{- \scaleto{\tfrac{1}{k-1}}{10pt} }.
\end{align*}

Finally, combining all three steps, we can conclude that 
\begin{equation*}
  \mP\left[ \sqrt{\delta\left(\overline{\Kscr}\right)} \leq 2\sum_{i=1}^p \sqrt{\delta(\Kscr_i)}\right] \geq 1 - (p-1)\exp(-\tfrac{n-2}{8}t^2),
\end{equation*}
with $\overline{\Kscr} = \sum_{i=1}^pQ_i\Kscr_i$.
\end{proof}

\subsection{Proof for \titlelink{\cref{thm:Incoherence}}} \label{sec:proof-thm:Incoherence}
Throughout this proof, we define the following notations for all $i\in1:k$:
\begin{itemize} 
  \item $\Dscr_i = \Dscr(\Ascr_i, x\nai)$;
  \item $\hat \Dscr_i =  \Dscr(\hat \Ascr_i, \hat x\nai)$;
  \item $\delta_i = \delta(\Dscr_i)$;
  \item $\delta_{1:i} = \delta\left(\sum_{j=1}^i\Dscr_i\right)$;
  \item $\delta_{-i} = \delta\left(\sum_{j\neq i}\Dscr_j\right)$.
\end{itemize} 
By~\cref{prop-descent-cone-properties-c}, for all $i\in1:k$, we have
\[\Dscr_i = \Dscr(Q_i \hat \Ascr_i, Q_i \hat x\nai) = Q_i\hat\Dscr_i.\]
Then it follows from \cref{prop:stat_dim_a} that
\[\delta(\hat \Dscr_i) = \delta(Q_i^T \Dscr_i) = \delta_i.\]
For all $i\in\irange1k$, define 
\begin{itemize}
  \item $\hat \Dscr_i\coloneqq  \Dscr(\hat \Ascr_i, \hat x\nai)$;
  \item $\delta_i = \delta(\Dscr_i)$;
  \item $\delta_{-i} = \delta\left(\sum_{j\neq i}\Dscr_j\right)$
\end{itemize}
First, fix $t >0$, for each $i\in\irange1k$, define the event
\[
E_i(t) =  \left \{ \cos\angle\left(-\Dscr_i, \sum_{j \neq i}\Dscr_j\right) \leq \tfrac{3}{\sqrt{n}}\left(\sqrt{\delta_i} + \sqrt{\delta_{-i}}\right) + t \right\}.
\] 
Denote the indicator random variable for $E_i(t)$ by $\mathbbm{1}_{E_i(t)}$, which evaluates to $1$ if $E_i(t)$ occurs and otherwise evaluates to $0$. Then, the following chain of inequalities gives the upper bound for the probability of the event $E_i(t)$:
\begin{align} \label{eq:1}
  \begin{split}
\mP (E_i(t)) 
  &= \mE (\mathbbm{1}_{E_i(t)})
\\&= \mE_{\{Q_j\}_{j\ne i}} \mE\left[\mathbbm{1}_{E_i(t)} \bigm\vert Q_j\ \forall j\ne i \right]
\\&\geq \mE_{\{Q_j\}_{j\ne i}} [ 1 - \exp(-\tfrac{n-2}{8}t^2) ]
\\&= 1 - \exp(-\tfrac{n-2}{8}t^2),
\end{split}
\end{align}
where the inequality follows from \cref{prop-angle-cones}. 

Next, by \cref{prop:bound-sum-sta-dim}, we know that 
\begin{equation}\label{eq:2}
  \mP\left[ \sqrt{\delta_{-i}} \leq 2\sum_{j\neq i} \sqrt{\delta_j}\right] \geq 1 - (k-2)\exp(-\tfrac{n-2}{8}t^2).
\end{equation}

Thirdly, for each $i\in\irange1k$, define the event, 
\[
\hat E_i(t) =  \left \{ \cos\angle\left(-\Dscr_i, \sum_{j \neq i}\Dscr_j\right) \leq \tfrac{6}{\sqrt{n}}\sum_{i=1}^k\sqrt{\delta_i} + t\right \}.
\]
By combining \eqref{eq:1} and \eqref{eq:2} together, we can conclude that 
\[\mP(\hat E_i(t)) \geq 1 - (k-1)\exp(-\tfrac{n-2}{8}t^2).\]
Extend the bound on $\mP(\hat E_i(t))$ to all $i\in\irange1k$ via the union bound:
\[\mP(\hat E_1\land\cdots\land \hat E_k) \geq 1 - k(k-1)\exp(-\tfrac{n-2}{8}t^2).\]

Finally, by our assumption that $\sum_{i=1}^k \sqrt{\delta(\Dscr_i)} \leq \left(1 - 4^{- \scaleto{\tfrac{1}{k-1}}{10pt} } - t\right)\sqrt{n} / 6$, it follows that 
\[\tfrac{6}{\sqrt{n}}\sum_{i=1}^k\sqrt{\delta_i} + t \leq 1 - 4^{- \scaleto{\tfrac{1}{k-1}}{10pt} }.\]
Therefore, we can conclude that the rotated pairs $\{(x\nai,\Ascr_i)\}_{i=1}^k$ are $4^{- \scaleto{\tfrac{1}{k-1}}{10pt} }$-incoherent with probability at least $1 - k(k-1)\exp(-\tfrac{n-2}{8}t^2)$.

% Can use something like this to put references on a page
% by themselves when using endfloat and the captionsoff option.
\ifCLASSOPTIONcaptionsoff
  \newpage
\fi

\bibliographystyle{IEEEtran}
\bibliography{refs/shorttitles, refs/master, refs/friedlander}

% that's all folks
\end{document}